\newcommand{\zug}[1]{\langle #1  \rangle}
\newcommand{\set}[1]{\{ #1  \}}
\newcommand{\stam}[1]{}
\newcommand{\A}{{\cal A}}
\newcommand{\C}{{\cal C}}
\newcommand{\D}{{\cal D}}
\newcommand{\F}{{\cal F}}
\newcommand{\N}{{\cal N}}
\newcommand{\M}{{\cal M}}
\renewcommand{\O}{{\cal O}}
\renewcommand{\S}{{\cal S}}
\renewcommand{\P}{{\cal P}}
\renewcommand{\tt}{\texttt{tt}}
\newcommand{\ff}{\texttt{ff}}
\newcommand{\Forw}{\mbox{\sc Forward}}
\newcommand{\Nat}{\mathds{N}}
\newcommand{\Real}{\mathds{R}}
\newcommand{\score}{\mbox{\sc Score}}
\newtheorem{theorem}{Theorem}[section]
\newtheorem{lemma}[theorem]{Lemma}
\newtheorem{definition}{Definition}[section]
\newtheorem{rmrk}[theorem]{Remark}
\newtheorem{exmpl}[theorem]{Example}
\newenvironment{remark}{\begin{rmrk}\rm}{\hspace{\stretch{1}}\end{rmrk}}
\newenvironment{example}{\begin{exmpl}\rm}{\hspace{\stretch{1}}\end{exmpl}}
\def\eod{\vrule height 6pt width 5pt depth 0pt}
\newenvironment{proof}{\noindent {\bf Proof:} \hspace{.677em}}
                      {\hspace*{\fill}{\eod}}
\title{Computing Scores of Forwarding Schemes in \\Switched Networks with Probabilistic Faults\thanks{This research was supported in part by the Austrian Science Fund (FWF) under grants S11402-N23 (RiSE/SHiNE) and Z211-N23 (Wittgenstein Award) and by the People Programme (Marie Curie Actions) of the European Union's Seventh Framework Programme FP7/2007-2013/ under REA grant agreement 607727.
}}
\author{Guy Avni\inst{1} \and Shubham Goel\inst{2} \and Thomas A. Henzinger\inst{1} \and Guillermo Rodriguez-Navas\inst{3}}
\institute{IST Austria\and IIT Bombay\and M\"alardalen University}
\begin{document}
\maketitle
\begin{abstract}
Time-triggered switched networks are a deterministic communication infrastructure used by real-time distributed embedded systems. 
Due to the criticality of the applications running over them, developers need to ensure that end-to-end communication is dependable and predictable. Traditional approaches assume static networks that are not flexible to changes caused by reconfigurations or, more importantly, faults, which are dealt with in the application using redundancy.
%, and which has limitations. 
We adopt the concept of handling faults in the switches from non-real-time networks while maintaining the required predictability.
%. Maintaining the required predictability is challenging, and it is the focus of this work.

We study a class of forwarding schemes that can handle various types of failures. We consider probabilistic failures. For a given network with a forwarding scheme and a constant $\ell$, we compute the {\em score} of the scheme, namely the probability (induced by faults) that at least $\ell$ messages arrive on time. We reduce the scoring problem to a reachability problem on a Markov chain with a ``product-like'' structure. Its special structure allows us to reason about it symbolically, and reduce the scoring problem to \#SAT. Our solution is generic and can be adapted to different networks  and other contexts. Also, we show the computational complexity of the scoring problem is \#P-complete, and we study methods to estimate the score. We evaluate the effectiveness of our techniques with an implementation.
\end{abstract}

\section{Introduction}
An increasing number of distributed embedded applications, such as the Internet-of-Things (IoT) or modern Cyber-Physical Systems, must cover wide geographical areas and thus need to be deployed over large-scale switched communication networks. The switches used in such networks are typically fast hardware devices with limited computational power and with a global notion of discrete time. 
Due to the criticality of such applications, developers need to ensure that end-to-end communication is dependable and predictable, i.e. messages need to arrive at their destination on time. 
The weakness of traditional {\em hard} real-time techniques is that they assume nearly static traffic characteristics and {\em a priori} knowledge about them. These assumptions do not fit well with setups where highly dynamic traffic and evolving network infrastructure are %expected to be 
the rule and not the exception; e.g. see \cite{Shree:2013}. For this reason, there is a pressing need to combine flexibility and adaptability features with traditional hard real-time methods % and techniques~
\cite{Ferreira:2006,GarciaLopezVillar:2013,Guti:2015}.

The Time-Triggered (TT) scheduling paradigm has been advocated for real-time communication over switched networks~\cite{steiner2014towards}. The switches follow a static {\em schedule} that prescribes which message is sent through each link at every time slot. The schedule is synthesized offline, and it is repeated cyclically during the system operation~\cite{pozo2015smt}. TT-schedules are both predictable and easy to implement using a simple lookup table. Their disadvantage is that they lack {\em robustness}; even a single fault can cause much damage (in terms of number of lost messages). Error-handling is left to the application designer and is typically solved by  statically introducing redundancy~\cite{bauer2000transparent}. Static allocation of redundancy has its limitations: i) it adds to the difficulty of finding a TT-schedule, which is a computationally demanding problem even before the addition of redundant messages, and ii) it reduces the effective utilization of resources. 

In contrast, non real-time communication networks typically implement error-recovery functionality within the switches, using some kind of flexible routing, to reduce the impact of crashes. Such an approach is used in {\em software defined networking} (SDN)  \cite{KR+15}, which is a booming field in the context of routing in the Internet. Handling crashes has been extensively studied in such networks (c.f., \cite{YLS+14,CGM+16,RCGF13} and references therein), though the goal is different than in real-time networks; a message in their setting should arrive at its destination as long as a path to it exists in the network. Thus, unlike real-time applications, there is no notion of a ``deadline'' for a message.
%Therefore it focuses on reliability and availability as the main aspects of network resilience/survivability and does not consider timeliness explicitly.

In this work we explore the frontier between both worlds. We adopt the concept of programmable switches from SDN to the real-time setting in order to cope with network faults. The challenge is to maintain the predictability requirement, which is the focus of this work. We suggest a class of deterministic routing schemes, which we refer to as {\em forwarding schemes}, and we show how to predict the behavior of the network when using a particular forwarding scheme. More formally, the input to our problem consists of a network $\N$ that is accompanied with probabilities of failures on edges, a set of messages $\M$ to be routed through $\N$, a (deterministic) forwarding scheme $\F$ that is used to forward the messages in $\M$, a timeout $t\in \Nat$ on the arrival time of messages, i.e., if a message arrives after time $t$, it is considered to be lost, and a guarantee $\ell \in \Nat$ on the number of messages that should arrive. Our goal is to compute the {\em score} of $\F$, which is defined as the probability (induced by faults) that at least $\ell$ messages arrive at their destinations on time when forwarding using $\F$.

\stam{
We define a measure of robustness that includes the concept of timeliness as defined by hard real-time systems as well as the concept of resilience as commonly defined for communication networks. %This measure is equivalent to the notion of scoring so we investigate methods for efficiently computing scores. ((This can be IMPROVED))
%We study a different solution concept that tries to avoid, or at least reduce, the need for redundancy, by handling the faults in the switches.
Our definition of the problem is generic; we assume the switches are capable of running a simple forwarding algorithm (only slightly more complicated than a look-up in a table). We suggest a general class of forwarding schemes that can run on such a network and tolerate faults.
%, equivalently to the concept of {\em Data plane} in SDN~\cite{kreutz2015software}.
Designing a good forwarding scheme for a network is a complicated problem, and we assume it is carried out by a designer. Our focus is to show how to compute the robustness of a given forwarding scheme in a network with probabilistic channel faults.   

The input to our problem consists of a network $\N$ that is accompanied with probabilities of failures on edges, a set of messages $\M$ to be routed through $\N$, a (deterministic) forwarding scheme $\F$ that is used to forward the messages in $\M$, a timeout $t\in \Nat$ on the arrival time of messages, i.e., if a message arrives after time $t$, it is considered to be lost, and a guarantee $\ell \in \Nat$ on the number of messages that should arrive. Our goal is to compute the robustness of $\F$, which we refer to as its {\em score} and is defined as the probability that at least $\ell$ messages arrive at their destinations on time when forwarding using $\F$.
}

Our score is a means for predicting the outcome of the network. If the score is too low, a designer can use redundancy techniques to increase it. Also, it is a means to compare forwarding schemes. When constructing a forwarding scheme, be it a TT-schedule or any other scheme, a designer has control on some of the components and others are fixed by the application. For example, in many networks, the size of the switches' queues are fixed to be small, making it impossible to use algorithms that rely on large memories.  As a second example, the message priorities are often fixed by their criticality. 
The choices made by the designer can highly influence the performance of the system on the one hand, and are very hard to predict on the other; especially when faults come into the picture. Our score can be used to compare different forwarding schemes, allowing the designer to evaluate his algorithm of choice. Also, our solution can be used for sensitivity analysis with respect to certain parameters of the network;
%to reason about other parameters of the network;
for example, one can fix the desired score of a scheme, and compute the threshold $\ell$ that guarantees this score, or 
 the score and $\ell$, and find the error probabilities for the channels \cite{AK15}.
%In this sense, and according to SDN terminology \cite{KR+15}, our work would be part of the Management plane, which encompasses analysis and simulation tools and in general all methods supporting strategic decisions about the network.

A first step towards handling faults in the switches was made in \cite{AGR16}. In their framework, the switches follow a TT-schedule and resort to a forwarding algorithm once a crash occurs. Our forwarding scheme is simpler and allows consideration of richer faults in a clean and elegant manner, which were impossible to handle in \cite{AGR16}'s framework. More importantly, they study adversarial faults whereas we study probabilistic ones, which are a better model for reality while they are considerably more complicated to handle. Using failover paths to allow for flexibility in switched networks was considered in \cite{liu2013f10,wei2014exploiting}.

The definition of the class of forwarding schemes requires care. On the one hand, the switches computation power is limited, so forwarding rules in the switches should be specified as propositional rules. But, on the other hand, it is infeasible to manually specify the rules at each switch as the network is large and is subject to frequent changes. So, we are required to use a central symbolic definition of an algorithm. However, while the definition of the central algorithm uses propositional rules, it should allow for variability between the switches and the messages' behavior in them. There are many ways to overcome these challenges, and we suggest one solution, which is simple and robust. Our forwarding scheme consists of three components. The first component is a {\em forwarding algorithm} that the switches run and is given by means of propositional forwarding rules. The two other components allow variability between the switches, each switch has priorities on messages, and each message has a preference on outgoing edges from each switch. The forwarding rules of the algorithm take these priorities and preferences into consideration. A similar priority-list model is taken in \cite{HPSK16}. Our algorithm for computing the score of a scheme is general and can handle various forwarding schemes that are given as propositional rules as we elaborate in Section~\ref{sec:disc}.

In order to score a given forwarding scheme, we first reduce the scoring problem to a reachability problem on a certain type of Markov chain, which is constructed in two steps. First, we focus on an individual message $m$ and construct a deterministic automaton $\D_m$ that simulates the forwarding scheme from the perspective of the message. Then, we combine the automata of all the messages into an automaton that simulates their execution simultaneously, and construct a Markov chain $\C$ on top of it by assuming a distribution on input letters (faults). The size of $\C$ is huge and the crux of our approach is reasoning about it symbolically rather than implicitly using PRISM \cite{KNP11} for example. We construct a  Boolean formula $\psi$ that simulates the execution of $\C$. The special product-like structure of $\C$ allows us to construct $\psi$ that is proportional in size to the sum of sizes of the $\D_m$ automata rather than the product of their sizes, which is the size of $\C$. There is a one-to-one correspondence between satisfying assignments to $\psi$ and ``good outcomes'', namely outcomes in which at least $\ell$ messages arrive on time. We then infer the score of the forwarding scheme from the {\em weighted} count of satisfying assignments to $\psi$; the weight of a satisfying assignment is the probability of the crashes in the corresponding execution of the network.

The problem of counting the number of satisfying assignments of a Boolean formula is called \#SAT and it has received much attention. The practical developments on this problem are quite remarkable given its computational intractably; even deciding whether a Boolean formula has one solution is an NP-complete problem that was considered impossible to solve practically twenty years ago, {\it a fortiori} counting the number of solutions of a formula, which is a \#P-complete problem and ``closer'' to PSPACE than to NP. Still, there are tools that calculate an exact solution to the problem \cite{Thu06} and a recent line of work that adapts the rich theory of finding approximate solutions with high probability \cite{JVV86} to practice (see \cite{MVC+16} for an overview). Also, extensions of the original problem were studied; strengthening of the formula to SMT rather than SAT \cite{CDM15} and reasoning about assignments with weights, referred to {\em weighted} \#SAT. As mentioned above, our solution requires this second extension. We show that we can alter the formula we construct above to fit in the framework of \cite{CFMV15}, allowing us to use their reduction and generate an equivalent  \#SAT instance.

While solving \#SAT is becoming more practical, it is still far from solved and it would be surprising if the tools will ever be able to compete with tools for solving SAT, e.g.,  \cite{MB08}. %short, which are widely used in practice. 
Thus, one can question our choice of using such a heavy tool to solve our scoring problem. We show that a heavy tool is essential by showing that scoring a forwarding scheme is \#P-complete, by complementing the upper bound above with a reduction in the other direction: from \#SAT to scoring a forwarding scheme.

We also study approaches to estimate the score of a forwarding scheme. We run a randomized algorithm that, with high probability, finds a solution that is close to the actual score. Using an approximate counting tool to count the Boolean formula we construct above, performs very poorly as the reduction of \cite{CFMV15} constructs an instance which is particularly hard for the approximate counting techniques. Thus, in order to employ the tools to approximately solve \#SAT we need to bypass the reduction. We suggest an iterative algorithm that takes advantage of the fact that in practice, the probability of failure is low, so traces with many faults have negligible probability. A second technique we use  is a Monte-Carlo simulation, which has been found very useful in reasoning about networks \cite{rubinstein2011simulation} as well as in statistical model checking in tools like PLASMA \cite{JLS12}, UPPAAL \cite{LPY97}, and PVeSta \cite{AM11}.

We have implemented all our techniques. We show that the exact solution scales to small networks. The solution that relies on approximated counting scales better, but is overshadowed by the Monte-Carlo approach, which scales nicely to moderate networks. We also use the exact solution  to evaluate the scores of the Monte-Carlo approach and we find that it is quite accurate. We note that our counting techniques rely on counting tools as black-boxes and, as mentioned above, improving these techniques is an active line of work. We expect these tools to improve over time, which will in turn improve the scalability of our solution.

\section{Preliminaries}
\label{sec:prelim}
We model a network as a directed graph $\N = \zug{V, E}$.
%For an edge $e = \zug{u,v} \in E$, we use $s(e)$ to denote its source $u$, and $t(e)$ to denote its target $v$.
For a vertex $v \in V$, we use $out(v) \subseteq E$ to denote the set of outgoing edges from $v$, thus $out(v) = \set{\zug{v, u} \in E}$. A collection $\M$ of messages are sent through the network.
Each message $m \in \M$ has a source and a target vertex, which we refer to as $s(m)$ and $t(m)$, respectively. Time is discrete. There is a global timeout $t \in \Nat$ and a message meets the timeout if it arrives at its destination by time $t$.

\subsection*{Forwarding messages}
A {\em forwarding scheme} is a triple $\F = \zug{\A, \set{\prec_v}_{v \in V}, \set{\prec_m^v}_{m \in \M, \ v \in V}}$, where $\A$ is a {\em forwarding algorithm} that the switches run and we describe the two other components below. For ease of notation, we assume the same number of edges $d \in \Nat$ exit all the switches in the network and in each switch they are ordered in some manner\footnote{In many settings, messages are grouped into few priorities making ``priority ties'' common. We assume a total order on message priorities, i.e., there is some arbitrary procedure to break ties.}. Then, our rules forward messages with respect to this order. For example, we can specify a rule that says ``forward a message $m$ on the first edge'' by writing $\Forw(m, e_1)$. The two other components of $\F$ allow variability; each switch $v \in V$ has an order $\prec_v$ on messages, which are priorities on messages, and each message $m \in \M$ has an ordering $\prec_m^v$ on the outgoing edges from $v$, which are preference on edges. 

The propositional rules in $\A$ are of the form $\varphi \rightarrow \Forw(m,e)$. We refer to $\varphi$ as the assertion of the rule and its syntax is as follows
\[\varphi \ ::= \ m \ | \ e_i \ | \  m < m' \ | \ e_i <_m e_j  \ | \ \varphi \vee \varphi \ | \ \neg \varphi
\]
Note that $m$ and $m'$ refer to specific messages in $\M$ while $e_i$ refers to the $i$-th exiting edges from a switch. The forwarding at a switch is determined only by the local information it has; the messages in its queue and its outgoing active edges. In other words, switches are not aware of faults in distant parts of the network and this fits well with the philosophy of the simple networks we model. 

Intuitively, the algorithm takes as input the messages in the queue as well as the active edges, and the output is the forwarding choices. Accordingly, the semantics of an assertion $\varphi$ is with respect to a set of messages $M \subseteq \M$ (the messages in the queue) and a set of edges $T \subseteq \set{e_1,\ldots, e_d}$ (the active edges). Consider a rule $\varphi \rightarrow \Forw(m,e_i)$. We denote by $(M, T) \models_{\prec_v, \set{\prec_m^v}_{m \in \M}} \varphi$ the fact that $(M,T)$ satisfies $\varphi$. Then, $m$ is forwarded on the $i$-th outgoing edge from $v$, namely $e_i$. When $\prec_v$ and $\prec_m^v$ are clear from the context, we omit them. The semantics is defined recursively on the structure of $\varphi$. For the base cases, we have $(M, T) \models m$ iff $m \in M$, thus $m$ is in $v$'s queue, we have $(M,T) \models e_i$ iff $e_i \in T$, thus $e_i$ is active, we have $(M,T) \models (m < m')$ iff $m \prec_v m'$, thus $m'$ has precedence over $m$ in $v$, and we have $(M,T) \models (e_i <_m e_j)$ iff $e_i \prec_m^v e_j$, thus $m$ prefers being forwarded on the $j$-th edge over the $i$-th edge. The inductive cases are as expected.

The algorithm forwards messages on active links. We think of its output as pairs $O \subseteq \M \times E$, where $\zug{m, e} \in O$ implies that the algorithm forwards $m$ on $e$. We require that the algorithm obey the constraints of the network; at most one message is forwarded on a link, messages are forwarded only on active links, messages originate only from their source switch, they are forwarded only after they are received, and they are not forwarded from their destination.

It is sometimes convenient to use definitions of sets in an algorithm as we illustrate in the examples below. A definition of a set is either a collection of messages or a collection of edges that satisfy an assertion as in the above. We also allow set operations like union, intersection, and difference, for sets over the same types of elements. Later on, when we simulate the execution of the forwarding algorithm as a propositional formula, we use extra variables to simulate these operations.

\begin{example} \label{ex:TT} {\bf TT-schedule}
A time-triggered schedule (TT-schedule, for short) assigns messages to edges such that (1) the schedule assigns a message $m$ on a path from its source to target, i.e., it is not possible that $m$ is scheduled on $e$ before it reaches $s(e)$, (2) two messages cannot be sent on the same link at the same time, and (3) all messages must arrive by time $t$. Given a TT-schedule $S$, we can construct an equivalent forwarding scheme assuming there is no {\em redundant waiting}, namely assuming a message $m$ arrives at a switch $v$ at time $i$ and should be forwarded on $e$ at a later time, then, if $m$ stays in $v$, it is only because $e$ is occupied by a different message. We note that a schedule induces an order on the messages  at each vertex, which we use as $\prec_m$, and it induces a path $\pi_m$ for each message, which induces an order $\prec^v_m$ in which the edges on $\pi_m$ have the highest preference.

In order to describe the rules of the algorithm (as well as the rules in the following example), we introduce several definition. For $S \subseteq \M$, we define an assertion $\mbox{priority}(m, S)$ that is satisfied in switches where $m$ has the highest priority out of the messages in $S$, thus $\mbox{priority}(m, S) = \bigwedge_{m' \in S} (m' < m)$. Next, we define an assertion $\mbox{prefers}(m, e_i)$ that is satisfied in vertices where $m$ prefers $e_i$ over all the active edges, thus $\mbox{prefers}(m, e_i) = \bigwedge_{j \neq i} \big(e_j \rightarrow (e_j <_m e_i)\big)$. Finally, we define a set of message $S_{e_i} = \set{m \in \M: \mbox{prefers}(m, e_i)}$, namely $S_{e_i}$ at a vertex $v$ contains the messages that are forwarded on $i$-th outgoing edge from $v$. 

We are ready to describe the algorithm using forwarding rules. For every $m \in \M$ and $i=1,\ldots,d$, $m$ is forwarded on $e_i$ when (1) $m$ is in the queue, (2) $m$ prefers $e_i$, (3)~$e_i$ is active, and (4) $m$ has the highest priority of the messages in $S_{e_i}$. The corresponding rule is $m \wedge \mbox{prefers}(m, e_i) \wedge e_i \wedge \mbox{priority}(m, S_{e_i})\rightarrow \Forw(m, e_i)$.
\end{example}

\stam{
\begin{example} {\bf Fallback}
In networks where the switches' memory size is sufficiently large, a natural forwarding technique allows each message to have  a high priority path and fall back paths. A message prefers being forwarded on its priority path as long as the edges on it are active. Thus, if two messages $m_1$ and $m_2$ are at vertex $v$, where $m_1 \prec_v m_2$ and both prefer using edge $e \in out(v)$, then $m_2$ will be forwarded on $e$ whereas $m_2$ will wait for $e$ to be available. That is, rather than using its fallback path, $m_2$ waits in $v$'s queue till $e$ is available. A similar protocol was studied in \cite{AGR16}. 

In order to describe the algorithm formally, we need two propositional assertions at a vertex $v \in V$. For a message $m \in \M$, $e \in E$, and $T \subseteq E$, the assertion $\mbox{prefers}(e, T)$ asserts that $e$ is the highest preference of $m$ out of the edges in $T$; we have $\mbox{prefers}(e, T) = \bigwedge_{e' \in out(v)} e' \in T \rightarrow (e' \prec_m^v e)$.  For a message $m \in \M$ and a set of messages $S \subseteq \M$ the assertion $\mbox{priority}(m, S)$ asserts that $m$ is the highest priority message in $S$; we have $\mbox{priority}(m, S) = m\in S \wedge \bigwedge_{m' \in \M} m' \in S \rightarrow (m' \prec_v m)$. 

(TODO) We describe the algorithm formally. Consider a vertex $v$, let $M$ be the set of messages in $v$'s queue, and let $T \subseteq out(v)$ be the active edges. For every edge $e \in T$, let $S_e \subseteq M$ be the messages whose first preference in $T$ is the edge $e$, thus $S_e = \set{m \in M: \mbox{prefers}(e, T)}$. Now, for every active edge $e \in T$, we forward the highest priority message in $S_e$ on $e$, thus, for every message $m \in \M$, we have $e \wedge \mbox{priority}(m, S_e) \rightarrow \Forw(m, e)$. The other messages stay in $v$'s queue. 
\end{example}
}
\begin{example}
\label{ex:hot-potato}
{\bf Hot-potato} This algorithm is intended for networks in which the switches' queue size is limited. Intuitively, messages are ordered in decreasing priority and are allowed to choose free edges according to their preferences. So, assume that the set of active outgoing edges of a switch $v$ is $T \subseteq \set{e_1,\ldots, e_d}$, and the message in the queue are $M = \set{m_1, \ldots, m_k}$ ordered in increasing priority, i.e., for $1 \leq i < j \leq k$, we have $m_j \prec_v m_i$. Then, $m_1$ chooses its highest priority edge $e$ in $T$, i.e., for every other edge $e' \in T$, we have $e' \prec_m^v e$. Following $m_1$, the message $m_2$ chooses its highest priority edge in $T \setminus \set{e}$, and so forth. If a message is left with no free outgoing edge, it stays in $v$'s queue. The algorithm has a low memory consumption: rather than keeping a message $m$ in the queue till its preferred edge is free, the switch forwards $m$ on a lower-preference edge. Note that unlike the algorithm in Example~\ref{ex:TT}, the hot-potato algorithm has fault tolerant capabilities. %full The definition of the algorithm using propositional rules can be found in the full version.%Appendix~\ref{app:hot-potato}.

We alter the assertion prefers by adding to it a set $T \subseteq \set{e_1, \ldots, e_d}$, so that $\mbox{prefers}(m, e, T)$ is satisfies when $m$ prefers $e$ over all active edges in the set $T$. Let $k = \min\set{|\M|, d}$, thus $k$ is an upper bound on the number of messages that can be forwarded from a switch at each time.  We define sets of messages $S_1 \supseteq S_2 \supseteq \ldots \supseteq S_k$, where $S_1$ is the set of messages in the queue, and for $1 < i \leq k$, we have $S_i = S_{i-1} \setminus \set{m :\bigvee_{m \in S_i} \mbox{priority}(m, S_i)}$. That is, for $1 \leq i \leq |M|$, the set $S_i$ contains the messages after the $i$-highest priority messages have been forwarded. Now, we define a sequence of $k$ sets of edges $T_1 \supseteq \ldots \supseteq T_{k}$. The set $T_i$ is the set of edges that are available for the message of priority $i$. Thus, we have $T_1$ is the set of active edges, and, for $i > 1$, the set $T_i$ contains the edges in $T_{i-1}$ minus the edge that the message of priority $i-1$ selected, thus $T_i = T_{i-1} \setminus \set{e_j: \bigvee_{m \in \M} \mbox{priority}(m, S_i) \wedge \mbox{prefers}(m, e, T_i)}$. Finally, for every $m \in \M$ and $1 \leq i \leq k$, and $1 \leq j \leq d$, if $m$ is the highest priority message in $S_i$, and $e_j$ is its highest priority edge in $T_i$, we forward $m$ on $e_j$, thus we have a rule $\mbox{priority}(m, S_i) \wedge \mbox{prefers}(e_j, T_i) \rightarrow \Forw(m, e_j)$.
\end{example}

\subsection*{Faults and Outcomes}
We consider two types of faults. The first type are crashes of edges. We distinguish between two types of crashes: {\em temporary} and {\em permanent} crashes in which edges can and cannot recover, respectively. A second type of fault model we consider are faults on sent messages. We consider {\em omissions} in which a sent message can be lost. We assume the switches detect such omissions, so we model these faults as a sent message that does not reach its destination and re-appears in the sending switch's queue. As we elaborate in Section~\ref{sec:disc}, our approach can handle other faults such as ``clock glitches'', which are common in practice. 

The {\em outcome} of a forwarding scheme $\F$ is a sequence of {\em snapshots} of the network at each time point. Each snapshot, which we refer to as a {\em configuration}, includes the positions of all the messages, thus it is a set of $|\M|$ pairs of the form $\zug{m,v}$, meaning that $m$ is on vertex $v$ in the configuration. We use $\O$ to denote the set of all outcomes. Each outcome in $\O$ has $t+1$ configurations, thus $\O \subseteq (\M \times V)^{t+1}$. All outcomes start from the same initial configuration $\set{\zug{m, s(m)}: m \in \M}$ in which all messages are at their origin. Consider a configuration $C$. Defining the next configuration $C'$ in the outcome is done in two steps.  In the first step, we run $\F$ in all vertices. Consider a vertex $v$, let $T \subseteq out(v)$ be a set of active edges. The set of messages in $v$'s queue is $M = \set{m: \zug{m,v} \in C}$. Intuitively, we run $\F$ at $v$ with input $M$ and $T$. The forwarding algorithm keeps some of the messages $S \subseteq M$ in $v$'s queue and forwards others. The messages in $S$ stay in $v$'s queue, thus we have $\zug{m,v} \in C'$ for every message $m\in S$. Recall that the algorithm's output is $O \subseteq (\M \times E)$, where $\zug{m, e} \in O$ means that $m$ is forwarded on the link $e$. In the second step, we allow omissions to occur on the pairs in $O$. If an omission occurs on $\zug{m, \zug{v,u}} \in O$, then $m$ returns to the source of the edge and we have $\zug{m,v} \in C'$, and otherwise, sending is successful and we have $\zug{m, u} \in C'$. 

\stam{
We refer to the sequence of failures in an outcome as a {\em failure sequence}. It is a sequence of the form $T_1, L_1, \ldots, T_t, L_t$, where, for $1 \leq i \leq t$, the set of edges $T_i$ are the active edges at time $i$, and the set $L_i \subseteq (\M \times E)$ represents the messages that are successfully forwarded at time $i$. Thus, the edges in $E \setminus T_i$ are the edges that crashed at time $i$. Since we focus on crashes, we have $T_i \subseteq T_{i+1}$, for $1 \leq i < t$. Assuming the forwarding scheme outputs $O_i$ at time $i$, we have $L_i \subseteq O_i$. Again, the messages in $O_i \setminus L_i$ are the messages that got omitted. We say that $k$ edge-crashes occur if $|E \setminus T_t| = k$ and $k$ omissions occur if $\sum_{i=1}^t |O_i \setminus L_i| = k$. In \cite{AGR16}, we considered the adversarial setting in which we expected the ``worst'' fault sequence to occur once a forwarding scheme is fixed. More formally, we allowed only edge crashes, thus no omissions can occur, and we asked, given a forwarding scheme and two thresholds $k$ and $\ell$, is there a fault sequence with at most $k$ crashes in which less than $\ell$ messages arrive on time.
}

We consider probabilistic failures. For every edge $e \in E$, we assume there is a probability $p^e_{crash}$ that $e$ crashes as well as a probability $p^e_{omit}$ that a forwarded message on $e$ is omitted. Allowing different probabilities for the edges is useful for modeling settings in which the links are of different quality. Note that we allow ``ideal'' links with probability $0$ of failing.  Faults occur independently though some dependencies arise from our definitions and we highlight them below. In the temporary-crash model, the probability that $e$ is active at a time $i$ is $1-p^e_{crash}$. In the permanent-crash model, crashes are dependent. Consider a set of active edges $T \subseteq E$. The probability that the active edges in the next time step are $T' \subseteq T$ is $\prod_{e \in T'}(1-p^e_{crash}) \cdot \prod_{e \in (T \setminus T')} p^e_{crash}$.
%short the probability that all the edges in $T'$ stay active, i.e., $\prod_{e \in T'}(1-p^e_{crash})$, multiplied by the probability that the edges in $T \setminus T'$ crash, i.e., $\prod_{e \in (T \setminus T')} p^e_{crash}$. 
We define omissions similarly. Consider a configuration $C$, active edges $T$, and let $O$ be the output of the algorithm. The probability that an omission occurs to a pair in $\zug{m,e} \in O$ is $p^e_{omit}$. Here too there is dependency between omissions and crashes: an omission can only occur on an edge that a message is sent on, thus the edge must be active. Such fault probabilities give rise to a probability distribution on $\O$, which we refer to as $\D(\O)$.  

%\vspace{-0.15cm}
\begin{definition}
Consider $1 \leq \ell \leq |\M|$. Let $G$ be the set of outcomes in which at least $\ell$ messages arrive on time. We define $\score(\F) = \Pr_{\pi \sim \D(\O)}[\pi \in G]$.
\end{definition}

\stam{
\begin{remark}
It is possible to generalize this definition of score by considering a {\em specification} $\S$ on outcomes. A specification can be a subset of $\O$ that indicates which outcomes are good outcomes. Alternatively, $\S$ can be thought of assigning values in $\set{0,1}$ to outcomes. A further generalization assumes $\S$ assigns richer values. In both cases we can view $\S$ as a random variable. The score can be defined using a threshold value $c \in \Real$, then it is $\Pr[\S \geq c]$, or it can be the expectation of $\S$. The choice of which definition to use depends on the practical setting. Our results in the following sections carry over to these more general definitions.
\end{remark}
}

\section{From Scoring to Markov Chain}
\label{sec:score2markov}
In this section we show how to reduce the problem of finding the score of a forwarding scheme to a reachability problem on a Markov chain. %full We describe the intuition for the construction and the formal details can be found in the full version.
%Appendix~\ref{app:score2markov}. 
We start with temporary crashes and omissions. A {\em deterministic automaton} (DFA, for short) is a tuple $\D = \zug{\Sigma, Q, \delta, q_0, F}$, where $\Sigma$ is an alphabet, $Q$ is a set of states, $\delta:Q \times \Sigma \rightarrow Q$ is a transition function, $q_0 \in Q$ is an initial state, and $F \subseteq Q$ is a set of accepting states. We use $|\D|$ to denote the number of states in $\D$. An {\em automaton frame} is a DFA with no accepting states. A Markov chain is a tuple $\zug{Q, \P, q_0}$, where $Q$ is a set of states, $\P: Q \times Q \rightarrow [0,1]$ is a probability function such that for every state $q \in Q$, we have $\sum_{e = \zug{q,p} \in Q \times Q} \P[e] = 1$, and $q_0 \in Q$ is an initial state. A Markov chain induces a probability distribution on finite paths. The probability of a path $\pi = \pi_1, \ldots, \pi_n$, where $\pi_1 = q_0$ is the product of probabilities of the transitions it traverses, thus $\Pr[\pi] = \prod_{1 \leq i <n} \Pr[\zug{\pi_i, \pi_{i+1}}]$. For a bound $t \in \Nat$, we use $\Pr_{\set{\pi:|\pi|\leq t}}$ to highlight the fact that we are restricting to the probability space on runs of length at most $t$.

Consider a network $\N = \zug{V,E}$, a set of messages $\M$, a forwarding scheme $\F$, and a message $m \in \M$. We describe an automaton frame $\D_m[\N,\M, \F]$ that simulates the routing of $m$ in $\N$ using $\F$. We have $\D_m[\N,\M,\F] = \zug{(2^\M \times 2^E) \cup (E \cup \set{\bot}), V \cup E, \delta_m, s(m)}$, where we describe $\delta_m$ below. We omit $\N$, $\M$, and $\F$ when they are clear from the context. Intuitively, the subset of states $V$ model positions in the network and the subset of states $E$ are intermediate states that allow us to model omissions. When $\D_m$ is at state $v \in V$, it models the fact that $m$ is in the switch $v$. Accordingly, the initial state is $s(m)$ and the transition function $\delta_m$ simulates the forwarding scheme $\F$: every outgoing transition $\tau$ from a state $v \in V$ corresponds to forwarding rule $\varphi \rightarrow \Forw(m, e_i)$ for $m$. The transition $\tau$ is labeled by an alphabet letter $(M, T)$, where $M \subseteq \M$ models the messages in $v$'s queue, and $T \subseteq E$ models the active edges. Furthermore, we have $(M,T) \models \varphi$, thus $m$ is forwarded on the $i$-th edge leaving $v$.  We refer to the state at the end-point of the transition $\tau$ as $e \in E$, thus $e$ is the $i$-th edge leaving $v$. Recall that $e$ is used to model omission. Accordingly, it has two outgoing transitions: one directs back to $v$, and the second models a successful transmission and directs to the state that corresponds to the vertex $t(e)$. We define the transition function $\delta_m$ formally. For $e =\zug{v,u} \in E$, we have $\delta_m(e, e) = u$ and $\delta_m(e, \bot) = v$, and for $v \in V$, $M \subseteq \M$, and $T \subseteq E$, we have 
\[
\delta_m(v, (M,T)) = \begin{cases} e & \text{ if } \exists \varphi \rightarrow \Forw(m, e_i) \in \A, \ e= e_i, \text{ and } (M,T) \models \varphi \\
\zug{v,v} & \text{ otherwise}. \end{cases}
\]

Next, given a network $\N$, a set of messages $\M$, and a forwarding scheme $\F$, we construct an automaton-frame DFA $\D[\N,\M, \F]$ that simulates the runs of all the $\D_m$ frames. Consider a guarantee constant $1 \leq \ell \leq |\M|$. The constant $\ell$ determines the accepting states of $\D[\N,\M,\F]$: states in which at least $\ell$ messages arrive on time are accepting. Formally, we have $\D^\ell[\N,\M,\F] = \zug{2^E, V^{|\M|} \cup E^{|\M|}, \delta, q_0^\D, F_\ell}$, where we describe the definition of $q_0^\D$, $\delta$, and $F_\ell$ below. We omit $\N$, $\M$, $\F$, and $\ell$ when they are clear from the context. Recall that $\D$ simulates the execution of the network when routing according to $\F$. A state $\zug{v_1, v_2,\ldots, v_{|\M|}}$ in $\D$ represents the fact that, for $1 \leq i \leq |\M|$, message $m_i$ is in the switch $v_i$ and its frame is in the corresponding state, and similarly for a state in $E^{|\M|}$. Accordingly, the initial state $q_0^\D$ is $\zug{s(m_1),\ldots, s(m_{|\M|})}$ and a state is accepting iff at least $\ell$ messages arrive at their destination, thus $F_\ell = \set{\zug{v_1, \ldots, v_{|\M|}}: |\set{j : v_j = t(m_j)}| \geq \ell}$. Recall that the alphabet of a frame $\D_m$ consists of two types of letters; a letter $M \subseteq \M$ models the messages in a switch's queue and a letter $T \subseteq E$ models failures. Since in $\D$,  the messages in the queues can be induced by the positions of the frames, the alphabet of the frame $\D$ consists only of the second type of letters. Consider a state $\zug{v_1, v_2,\ldots, v_{|\M|}}$ in $\D$ and an input letter $T \subseteq E$. For $1 \leq i \leq |\M|$, let $M \subseteq \M$ be the messages at vertex $v_i$, thus $M = \set{m_j: v_j=v_i}$. Then, the $i$-th component in the next state of $\D$ is $\delta_{m_i}(v_i, (M, T))$. The definition for states in $E^{|\M|}$ is similar, though here, when an outgoing transition is labeled by a letter $O \subseteq E$, it models the messages that where successfully delivered. 

Recall that the letters in $\D[\N, \M, \F]$ model failures. We assume probabilistic failures, thus in order to reason about $\N$ we construct a Markov chain $\C[\N, \M,\F]$ on the structure of $\D[\N, \M, \F]$ by assuming a distribution on input letters. Formally, we have $\C[\N, \M, \F] = \zug{V^{|\M|} \cup E^{|\M|}, \P, q_0^\D}$, where $\tau= \zug{\overline{v}, \overline{e}} \in V^{|\M|} \times E^{|\M|}$ has a positive probability iff there exists $T \subseteq E$ such that $\delta(\overline{v}, T) = \overline{e}$, then $\P[\tau] = \prod_{e \in T} p_e \cdot \prod_{e \notin T} (1-p_e)$, and the definition of edges from states in $E^{|\M|}$ to $V^{|\M|}$ is similar. We can now specify the score of a forwarding scheme as the probability of reaching $F_\ell$ in $\C[\N, \M, \F]$.

\begin{theorem}
\label{thm:markov}
Let $\N$ be a network, $\M$ a set of messages, $\F$ be a forwarding scheme, and $1 \leq \ell \leq |\M|$ a guarantee. For a timeout $t \in \Nat$, we have that $\Pr_{\set{\pi:|\pi|\leq t}} [\set{ \pi: \pi \text{ reaches } F_\ell}]$ in $\C[\N, \M, \F]$ equals $\score(\F)$.
\end{theorem}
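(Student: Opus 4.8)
The plan is to treat the underlying randomness as a \emph{fault pattern} $\omega = (T_1, O_1, \ldots, T_t, O_t)$, where $T_j \subseteq E$ is the set of active edges at step $j$ and $O_j$ records which forwarded messages are successfully delivered at step $j$. In the temporary-crash model the fault probabilities make $\omega$ a product distribution $\mu$: edges crash independently across links and across time, and each forwarded message is omitted independently. Both objects in the theorem are deterministic functions of $\omega$. Running $\F$ at the configuration level yields an outcome $\F(\omega) \in \O$, and $\D(\O)$ is by definition the pushforward $\mu \circ \F^{-1}$; likewise $\omega$ drives the automaton $\D[\N,\M,\F]$ along a unique path, and I will check that the Markov chain $\C[\N,\M,\F]$ is exactly the pushforward of $\mu$ onto these paths. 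Granting this, it suffices to prove the pointwise equivalence that for every $\omega$ the outcome $\F(\omega)$ has at least $\ell$ messages arriving by time $t$ iff the path of $\D$ on $\omega$ reaches $F_\ell$; the theorem then follows by pushing the set identity $\set{\omega : \F(\omega)\in G} = \set{\omega : \D \text{ reaches } F_\ell}$ through $\mu$.

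The heart of the argument is a simulation lemma, proved by induction on the step $j$: the $V^{|\M|}$-component of the state of $\D$ after $j$ rounds on $\omega$ encodes exactly the configuration $C_j$ of $\F(\omega)$, i.e. its $i$-th coordinate is the vertex holding message $m_i$ in $C_j$. The per-message part is immediate from the construction, since $\delta_{m_i}$ was defined directly from the rules $\varphi \to \Forw(m_i, e_k)$ of $\A$, and a message at its destination stays there because $\A$ never forwards a message from $t(m_i)$, so $\delta_{m_i}(t(m_i), (M,T)) = \zug{t(m_i), t(m_i)}$ is absorbing. The step needing care is that the product automaton reads the queue $M$ off the global state as $M = \set{m_j : v_j = v_i}$ rather than from the input letter: using the induction hypothesis that the coordinates encode $C_j$, this induced $M$ is precisely the queue at $v_i$ in $C_j$, so evaluating each $\delta_{m_i}(v_i, (M, T_{j+1}))$ in parallel computes, vertex by vertex, the same forwarding decisions that $\F$ makes on $C_j$. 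Since $\F$'s decision at a vertex is a deterministic function of the local pair $(M,T)$ and $\A$ respects the network constraints (one message per active link, omissions only on active links, no spurious origins), the parallel per-message computation is mutually consistent and reproduces $C_{j+1}$.

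Two further points complete the proof. First, a single time step of an outcome has two phases -- run $\F$ with active edges $T_{j+1}$ to obtain the output $O$, then apply omissions -- and these correspond to the two transitions $V^{|\M|} \xrightarrow{T_{j+1}} E^{|\M|} \xrightarrow{O_{j+1}} V^{|\M|}$ of one round in $\C$; the intermediate $E^{|\M|}$-states are exactly the device for modelling omissions, so the alignment is literal. Second, the Markov chain measure agrees with the pushforward $\mu \circ (\text{path of } \D)^{-1}$: by construction each transition probability of $\C$ is the total $\mu$-probability of the fault letters inducing it ($\prod_{e \in T} p_e \prod_{e \notin T}(1-p_e)$ for crashes, and the analogous product for omissions), and by independence across steps the product of transition probabilities along a path telescopes to the $\mu$-measure of the fault patterns producing that path. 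Finally, because arrived messages are absorbing, $\F(\omega)$ has at least $\ell$ messages home by time $t$ iff $C_t \in F_\ell$ iff the path of $\D$ meets $F_\ell$ within the time horizon; combining the pointwise equivalence with the pushforward identity gives $\score(\F) = \mu[\F(\omega) \in G] = \mu[\D \text{ reaches } F_\ell] = \Pr_{\set{\pi : |\pi| \le t}}[\text{reaches } F_\ell]$ in $\C$.

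I expect the main obstacle to be the product-simulation step of the induction: one must verify rigorously that composing the independently-defined per-message frames, with queue contents inferred from the joint state, faithfully reproduces the \emph{interacting} behaviour of $\F$ -- in particular the priority-driven contention for a shared edge, the guarantee that at most one message is forwarded per link, and the re-insertion of an omitted message into its sending queue. The per-message correctness, the phase alignment, and the probability factorization are essentially bookkeeping once the independence of faults is used; it is this parallel-consistency claim, which is exactly what makes the ``product-like'' structure of $\C$ legitimate, that carries the real content.
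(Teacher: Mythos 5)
Your proposal is correct, and in fact it supplies a proof that the paper itself omits: Theorem~\ref{thm:markov} is stated there without any argument, presented as immediate from the construction of $\D_m$, $\D$, and $\C$. Your decomposition---a fault pattern $\omega=(T_1,O_1,\ldots,T_t,O_t)$ drawn from a product measure $\mu$, a simulation induction showing the $V^{|\M|}$-coordinates of $\D$'s state after $j$ rounds encode the configuration $C_j$ (with queues inferred as $M=\set{m_j : v_j = v_i}$), the two-phase alignment of one network step with the pair of transitions through $E^{|\M|}$, and the identification of both $\score(\F)$ and the reachability probability as pushforwards of $\mu$---is exactly the intended ``by construction'' correspondence, made explicit. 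Two of your readings quietly repair terse spots in the paper's definitions, and both are the right readings: you take the transition probability of $\C$ to be the \emph{total} $\mu$-probability of all letters $T$ inducing that transition (the paper's formula, read literally, presumes a unique such $T$, which fails when crashes of edges irrelevant to the current configuration are possible; only the summed version makes $\P$ a probability function), and you align $t$ network steps with $2t$ chain transitions, consistent with the paper's own Monte-Carlo section, whereas the theorem's ``$|\pi|\leq t$'' is a slight abuse of notation. The one caveat is scope: your product-measure argument covers the temporary-crash model, which is indeed the model the construction in Section~\ref{sec:score2markov} handles; for permanent crashes the paper products $\D$ with $2^{|E|}$, and your pushforward argument extends since the crashed set is then part of the state, but you should say so if you intend the theorem in that generality.
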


The construction above considers temporary crashes. Recall that in permanent crashes, once an edge crashes it does not recover. In order to reason about such crashes, we take a product of $\D$ with $2^{|E|}$. A state that is associated with a set $T \subseteq E$ represents the fact that the edges in $E \setminus T$ have crashed. Thus, input letters from such a state include only edges in $T$. 

\stam{
\begin{remark}
\label{rem:crashes}
Generalizing this idea is helpful. We can assume a {\em specification} on faults given as a DFA for each edge. For example, we can require that if an edge crashes, it stays down for at least, at most, or exactly $c$ time units. Then, we take the product of $\D$ with these automata similar to the product construction we used above.
\end{remark}
}

\section{Computing the Score of a Forwarding Scheme}
\label{sec:exact}
While Theorem~\ref{thm:markov} suggests a method to compute the score of a forwarding scheme by solving a reachability problem on the Markov chain $\C$, the size $\C$ is too big for practical purposes. In this section we reason about $\C$ without constructing it implicitly by reducing the scoring problem to \#SAT, the problem of counting the number of satisfying assignments of a Boolean formula. We proceed in two steps.

\subsubsection{Simulating executions of $\D$}
Recall that the Markov chain $\C$ shares the same structure as an automaton $\D$ whose input alphabet represents faults. We reason about $\D$ by constructing a Boolean formula $\psi$ whose satisfying assignments correspond to accepting runs of length $t$ of $\D$, which correspond in turn to ``good outcomes'' of the network, i.e., outcomes in which at least $\ell$ messages arrive on time. The crux of the construction is that the size of $\psi$ is proportional to the sum of sizes of the $\D_m$ automata that compose $\D$ rather than the product of their sizes, which is the size of $\D$. In order to ensure that the run a satisfying assignment simulates, is accepting, we need to verify that at least $\ell$ messages arrive on time. We show how to simulate a counter using a Boolean formula in the following lemma. %full whose proof can be found in the full version. %Appendix~\ref{app:counter}.
\begin{lemma}
\label{lem:counter}
Consider a set $X$ of $|\M|$ variables, a truth assignment $f: X \rightarrow \set{\tt, \ff}$, and a constant $1 \leq \ell \leq |\M|$. There is a Boolean formula $CNT_\ell$ over variables $X \cup Y$ such that there is a satisfying assignment to $CNT_\ell$ that agrees with $f$ on $X$ iff $|\set{x \in X: f(x) = \tt}| \geq \ell$. The size of $Y$ is $|\M| \cdot \log \lceil \ell +1 \rceil$ and $CNT_\ell$ has linear many constraints in $|X \cup Y|$.
\end{lemma}
\begin{proof}
Let $X = \set{x_1, \ldots, x_n}$ be a set of variables that is ordered arbitrarily and $1 \leq \ell \leq n$. We simulate a Boolean circuit that has $n$ bits of input (corresponding to an assignment to the variables in $X$), counts the number of variables that are assigned $1$, and returns $1$ iff there are at least $\ell$ such variables. Since we need to count to $\ell$, we need $\lceil \log \ell \rceil$ bits, and we need $n$ copies of the bits. For $1 \leq i \leq n$, let $\overline{y_i} = \set{y^j_i: 1 \leq j  \leq \lceil \log \ell \rceil}$. We define $Y = \bigcup_{1 \leq i \leq n} \overline{y_i}$. The first copy of the counting bits is initialized to $0$, thus we have a constraint $\neg y^1_1 \wedge \ldots \wedge \neg y^{\lceil \log \ell \rceil}_1$. For $1 \leq i \leq n$, we add constraints so that if the assignment to $x_i$ is $0$, then $\overline{y_i} = \overline{y_{i+1}}$ (i.e., the counter is not incremented), and if $x_i$ is $1$, then (roughly) $\overline{y_{i+1}} = \overline{y_i} +1$. Both can be achieved with polynomial many constraints in $\lceil \log \ell \rceil$. Finally, we add constraints that require that at least one of the counters equals $\ell$.  
\end{proof}

We proceed to construct the formula $\psi$.

\begin{theorem}
\label{thm:psi}
Given a forwarding scheme $\F$ for a network $\N$, a set of messages $\M$, and two constants $t, \ell \in \Nat$, there is a Boolean formula $\psi$ such that there is a one-to-one correspondence between satisfying assignment to $\psi$ and accepting runs of $\D^\ell[\N,\M,\F]$. The size of $\psi$ is $poly(|\N|,|\F|, |\M|, t, \log \ell)$.
\end{theorem}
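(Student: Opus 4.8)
The plan is to construct $\psi$ by encoding, for each time step and each message, the state of the corresponding automaton frame $\D_m$ as a block of Boolean variables, and then to stitch these blocks together so that: (i) consecutive states respect the transition function $\delta_m$, (ii) the input letters $T \subseteq E$ (the faults) are shared consistently across all frames at a given time step, and (iii) the final configuration lands in an accepting state. The key leverage, which is the whole point of the theorem, is that we encode each frame \emph{separately} and glue only through the shared fault letters; this keeps the size additive rather than multiplicative in the $|\D_m|$.

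First I would introduce variables. For each message $m \in \M$ and each time $0 \le k \le t$, let a block of variables $\overline{q}_{m,k}$ encode the current state of $\D_m$, whose state space is $V \cup E$, so $\lceil \log(|V|+|E|) \rceil$ bits suffice per block. For each time step $k$ introduce a block of variables $\overline{T}_k$ with one bit per edge, encoding the active-edge letter $T \subseteq E$ (and similarly a block $\overline{O}_k$ for the omission letters on the intermediate $E$-states). The initial-state constraint fixes $\overline{q}_{m,0}$ to encode $s(m)$ for each $m$. Next I would write the transition constraints. For a state $v \in V$, the successor under $\delta_m$ depends on $(M,T)$, where $M$ is determined by the positions of the \emph{other} frames at time $k$ (since, as the paper notes, the queue contents are induced by the frame positions). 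The assertion $\varphi$ of each forwarding rule $\varphi \rightarrow \Forw(m,e_i)$ is a propositional formula over the atoms $m' \in M$ and $e_i \in T$, so it translates directly into a Boolean constraint over the position blocks $\{\overline{q}_{m',k}\}$ and the fault block $\overline{T}_k$. Thus for each $m$, each source state $v$, and each rule, I add an implication forcing $\overline{q}_{m,k+1}$ to encode the appropriate edge (or self-loop) whenever the guard holds; the intermediate $E$-state transitions (omission vs.\ successful delivery) are encoded analogously using $\overline{O}_k$.

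The crucial size argument is this: although the guard $\varphi$ of a rule for $m$ refers to the positions of all other messages, each such reference is a single atom testing whether $\overline{q}_{m',k}$ equals a fixed vertex, so the constraint for one rule at one frame at one time is of size $poly(|\N|,|\F|,|\M|)$, and there are only $t \cdot |\M| \cdot (\text{number of rules})$ such constraints in total. No product over frames ever appears because the frames interact only through the shared fault blocks $\overline{T}_k, \overline{O}_k$ and through single-atom position queries. Summing, the encoding of the transition relation is $poly(|\N|,|\F|,|\M|,t)$. For any set-definition machinery in $\A$ (the auxiliary sets $S_{e_i}$, $T_i$, etc.\ of Examples~\ref{ex:TT} and~\ref{ex:hot-potato}), I introduce fresh Boolean variables per time step to name the membership of each element, exactly as the paper anticipates, adding only polynomially many further constraints. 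Finally, to enforce acceptance, I add for each message a Boolean variable $arr_m$ constrained to be true iff $\overline{q}_{m,t}$ encodes $t(m)$, and then invoke Lemma~\ref{lem:counter} with $X = \{arr_m : m \in \M\}$ to append the subformula $CNT_\ell$ forcing at least $\ell$ arrivals; this contributes $|\M|\cdot \log\lceil \ell+1\rceil$ variables and linearly many constraints, yielding the $\log \ell$ term in the bound.

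The one-to-one correspondence then follows by reading a satisfying assignment as a run: the fault blocks $\overline{T}_0,\dots,\overline{T}_{t-1}$ (together with the $\overline{O}_k$) determine the input word, the transition constraints force the position blocks to trace exactly the run of $\D$ on that word, and $CNT_\ell$ holds iff that run is accepting; conversely every accepting run yields a unique satisfying assignment. The main obstacle I anticipate is not the correspondence itself but the faithful propositional encoding of the set-valued auxiliary definitions and the sequential dependence they induce (e.g.\ $T_i$ depending on which message was forwarded at priority $i-1$ in Example~\ref{ex:hot-potato}): I must introduce intermediate variables in the right topological order and add defining constraints for each, taking care that these chains of definitions remain of polynomial length and do not secretly reintroduce a product over messages. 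Once that bookkeeping is discharged, the additive size bound and the bijection are routine.
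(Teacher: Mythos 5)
Your proposal is correct and follows essentially the same construction as the paper's proof: per-time variables tracking each frame's state, shared fault variables $\overline{T}_k,\overline{O}_k$ (the paper's $x_{e,i}$) as the only coupling between frames, rule-by-rule translation of the guards $\varphi$ into transition constraints, auxiliary variables for the set definitions, and Lemma~\ref{lem:counter} for the threshold $\ell$, yielding the same additive size bound. The only deviation is cosmetic --- you binary-encode each frame's state as a $\lceil\log(|V|+|E|)\rceil$-bit block, whereas the paper uses one variable $x_{m,v,i}$ per position, which if anything makes the ``exactly one state per frame per time'' invariant more immediate in your version.
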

\begin{proof}
%full We describe the intuition of the construction and the detail can be found in  the full version.%Appendix~\ref{app:psi}.
We use $|\M| \cdot |\N| \cdot t$ variables to simulate the execution of the underlying $|\M|$ frames. A variable of the form $x_{m,v,i}$ represents the fact that message $m$ is on switch $v$ at time $i$. We model the faults using variables: a variable $x_{e,i}$ represents the fact that $e$ is active at time $i$ and a variable $x_{e,m,i}$ represents the fact that sending message $m$ on link $e$ at time $i$ was successful. Recall that the transition function of the frames corresponds to the forwarding algorithm, which is given by a set of propositional rules. We simulate these rules using a Boolean formula over the variables. Finally, we add constraints that require that the run starts from the initial state, i.e., $x_{m, s(m), 1} = \tt$, and ends in an accepting state, i.e., $|\set{m \in \M : x_{m, t(m), t} = \tt}| \geq \ell$. For the later we use the assertion $CNT_\ell$ that is described in Lemma~\ref{lem:counter} with $X = \set{x_{m, t(m), t}: m \in \M}$.
 
We construct $\psi$ formally. Consider a network $\N = \zug{V, E}$, a set of messages $\M$, a forwarding scheme $\F$, and constants $\ell, t \in \Nat$. We describe the variables in $\psi$. For every message $m \in \M$, we have $(t+1) \cdot |\D_m|$ variables of the form $x_{m, v, i}$ and $x_{m, e, i}$, which represent respectively, the fact that $m$ is on vertex $v$ and that $m$ is send on edge $e$ at time $i$. Also, we have $2t\cdot |E|$ variables of the form $x_{e,i}$ that represent the fact that $e$ crashes at time $i$ (for odd $i$) and that an omission occurs on $e$ at time $i$ (for even $i$). We sketch the constraints in $\psi$. The first constraint requires that all messages start from their origins, thus we have $x_{m, s(m), 0} = \tt$. We simulate the transition function of $\D$ using constraints. We have $t \cdot |V| \cdot |\M|$ copies of every rule $\varphi \rightarrow \Forw(e_j, m)$. For every message $m \in \M$, vertex $v \in V$, and time $1 \leq i \leq t$, we re-write $\varphi$ as a constraint over the variables in $\psi$ by replacing appearances of $m'$ with $x_{m', v, i}$ and of $e$ with $x_{e, i}$. Then, we add a constraint to $\psi$ that requires that if $\varphi$ holds and $x_{m,v,i}$ holds, then $x_{m, e_j, i}$, thus $m$ is forwarded on $e_j$ at time $i$. The constraints that corresponds to outgoing transitions from states in $E^{|\M|}$ are similar. Finally, we require that at least $\ell$ messages arrive on time. Requiring that all messages arrive on time is easy; all we need to do is add a constraint $x_{t(m), m, t} = \tt$, for all $m \in \M$. In order to relax this constraint, we need an SMT constraint of the form $\sum_{m \in \M} x_{t(m), m, t} \geq \ell$. By Lemma~\ref{lem:counter}, this constraint can be specified as a Boolean constraint. 

Consider a satisfying assignment $f$ to $\psi$ and let $r$ be the corresponding rejecting run of $\D$. The probability of $f$ is the product of probabilities of letters it uses and it clearly coincides with the probability of $r$ in $\C$. Since there is a one to one correspondence between satisfying assignments and rejecting runs, we have $\score(\F) = \sum_{f \in SAT(\psi)} \Pr[f]$, and we are done.\hfill\qed
\end{proof}

\subsubsection{Reasoning about $\C$ using $\psi$}
\label{sec:WMC}
Recall that in Theorem~\ref{thm:markov}, we reduce the problem of scoring a forwarding scheme to the problem of finding the probability of reaching the accepting states in $\C$ in $t$ iterations. By Theorem~\ref{thm:psi} above, a satisfying assignment $f$ to $\psi$ corresponds to such an execution $r$. We think of $f$ as having a probability, which is $\Pr[r]$. Let $SAT(\psi)$ be the set of satisfying assignments to $\psi$. We have established the following connection: $\score(\F) = \sum_{f \in SAT(\psi)} \Pr[f]$. 

Recall that \#SAT is the problem of counting the number of satisfying assignments of a Boolean formula. The counting problem in the right-hand side of the equation above is a {\em weighted-model counting} (WMC, for short) problem, which generalizes \#SAT. The input to WMC is a Boolean formula $\varphi$ and a weight function $w$ that assigns to each satisfying assignment a weight, and the goal is to calculate $\score(\varphi) = \sum_{f \in SAT(\varphi)} w(f)$.  \#SAT is a special case in which the weight function is $w \equiv 1$, thus all assignments get weight $1$. In order to distinguish between the two problems, we sometimes refer to \#SAT as {\em unweighted model counting} (UMC, for short). 

The last step in our solution adjusts $\psi$ to fit in the framework of \cite{CFMV15} and use the reduction there from WMC to UMC. Their framework deals with weight functions of a special form: each literal has a probability of getting value true and the literals are independent. So the weight of an assignment is the product of the literals' probabilities. Accordingly, they call this fragment {\em literal-weighted} WMC. Formally, we have a probability function $\Pr[l]$, for every literal $l$ in $\psi$. We define $w(f) = \prod_{l: \sigma(l) = \tt} \Pr[l] \cdot \prod_{l: \sigma(l) = \ff} (1-\Pr[l]))$, and $\score(\psi) = \sum_{f \in SAT(\psi)} w(f)$. 
\begin{theorem}
\label{thm:literal WMC}
Consider the WMC  instance $\zug{\psi,w}$, where $\psi$ is the Boolean formula obtained in Theorem~\ref{thm:psi} and, for $f \in SAT(\psi)$ with corresponding execution $r$, we have $w(f)=\Pr[r]$. There is a literal-weighted WMC $\zug{\psi', w'}$ and a factor $\gamma$ such that $\gamma \cdot \score(\psi') = \score(\psi)$ and $\psi'$ is polynomial in the size of $\psi$.
\end{theorem}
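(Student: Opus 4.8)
The plan is to observe that $w(f)=\Pr[r]$ almost has the literal-weighted form already, and to isolate the single place where it fails. Writing $\Pr[r]$ as the product of the transition probabilities of $\C$ along $r$, and expanding each transition probability as in the construction of $\C$, we obtain a product, over all edges $e$ and all odd (crash) times $i$, of a factor $p^e_{crash}$ or $1-p^e_{crash}$, together with a product over the even (omission) times of the omission factors. The first group is immediately literal-weighted: I would declare in $w'$ that the literal $x_{e,i}$ (a crash at an odd time) carries probability $p^e_{crash}$ of being true, because the transition of $\C$ always multiplies in exactly one of $p^e_{crash}$ or $1-p^e_{crash}$ for every edge, regardless of where the messages currently are, and this value is pinned by $r$.

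The obstruction is the omissions. An omission factor $p^e_{omit}$ or $1-p^e_{omit}$ enters $\Pr[r]$ only on an edge $e$ on which some message is actually forwarded at time $i$; on every other edge the transition contributes the factor $1$, not $1-p^e_{omit}$. Hence the omission literals are context-dependent and $w$ is not literal-weighted. My fix is to pass from $\psi$ to $\psi'$ by deleting the constraints that, in $\psi$, pin the ``irrelevant'' omission variables $x_{e,i}$ (those for which no message is forwarded on $e$ at time $i$) to $\ff$, leaving them free, while keeping the literal weight of every omission variable equal to $p^e_{omit}$. A relevant omission variable is still forced by the run to its correct value and contributes the correct factor, whereas each free irrelevant variable is summed over both truth values and contributes $p^e_{omit}+(1-p^e_{omit})=1$, which is exactly the factor assigned by $\C$. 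This telescoping is the heart of the argument.

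It remains to handle the purely deterministic variables of $\psi'$, namely the position variables $x_{m,v,i}$, the forwarding variables $x_{m,e,i}$, and the counter bits $Y$ of $CNT_\ell$, which carry no probability but must still be given a literal weight. Since the framework of \cite{CFMV15} requires each variable to carry a probability summing to one over its two literals, I assign each such variable weight $\tfrac12$. Every satisfying assignment of $\psi'$ fixes all of these variables to the unique values consistent with its run, so each contributes the constant factor $\tfrac12$; letting $N$ be their number, every assignment picks up the same global factor $2^{-N}$. Putting the pieces together, for a fixed accepting run $r$ the $w'$-mass of the fiber of assignments mapping to $r$ equals $2^{-N}\Pr[r]$, and summing over all such runs gives $\score(\psi')=2^{-N}\score(\psi)$; thus $\gamma=2^{N}$ works. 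Since $\psi'$ is obtained from $\psi$ only by dropping constraints, it is of polynomial size. The step I expect to require the most care is proving the fiber identity $\sum_{f\mapsto r}w'(f)=2^{-N}\Pr[r]$ rigorously: I must confirm that unforcing the irrelevant omission variables creates no satisfying assignments lying outside these fibers (so no spurious run is counted) and that every relevant omission variable is genuinely determined by $r$, so that the telescoping is exact rather than merely approximate.
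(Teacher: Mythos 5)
Your proposal is sound for the fault model in which $\psi$ is actually built in Theorem~\ref{thm:psi} (temporary crashes plus omissions), but it takes a genuinely different route from the paper. The paper preserves the one-to-one correspondence between satisfying assignments and executions: it introduces a defined variable $fr_{e,i}=\bigvee_{m\in\M} x_{m,e,i}$ and, for each omission variable, two fresh literal-weighted variables $a_{e,i},b_{e,i}$ with weights $a=p^e_{omit}$, $b=\frac{1}{2-p^e_{omit}}$ and constraints $a_{e,i}=x_{e,i}\land fr_{e,i}$, $b_{e,i}=\neg x_{e,i}\land\neg fr_{e,i}$; these are solved so that the three realizable cases carry weights $\gamma^{-1}\cdot p^e_{omit}$, $\gamma^{-1}\cdot(1-p^e_{omit})$ and $\gamma^{-1}\cdot 1$ with a common per-pair factor $\gamma^{-1}=(1-a)\cdot b$, and the global $\gamma$ is the product of these local factors. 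You instead \emph{delete} constraints, letting each don't-care omission variable marginalize to $p^e_{omit}+(1-p^e_{omit})=1$, and you absorb all remaining normalization into uniform weight-$\frac12$ assignments on the deterministic variables, giving $\gamma=2^N$. Both arguments are correct; the paper's gadget buys locality (each dependency is repaired where it occurs and the assignment--run bijection survives), while yours buys economy (no auxiliary variables, not even $fr_{e,i}$, and in the framework of \cite{CFMV15} weight-$\frac12$ variables are effectively unweighted, so your $2^N$ factor is benign). Your fiber identity does rest on the deterministic variables --- positions, forwarding variables, and the counter bits $Y$ of $CNT_\ell$ --- being uniquely determined by the run, but that is exactly the one-to-one property already claimed in Theorem~\ref{thm:psi}, so you stand on the same ground as the paper there; your closing caveat about verifying that the unpinned variables occur in $\psi$ only under guards ($x_{m,e,i-1}$ true) that are false in the relevant fiber is precisely the right thing to check.

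The one substantive omission is the permanent-crash model, which the paper's proof also covers. Your opening claim that each crash literal $x_{e,i}$ can simply be declared to hold with probability $p^e_{crash}$, ``regardless of where the messages currently are,'' is valid only for temporary crashes: under permanent crashes $x_{e,i}$ and $x_{e,i-2}$ are dependent (a crashed edge stays crashed with probability $1$), so the crash factors in $\Pr[r]$ are not a product of independent per-time literal weights. The paper repairs this with a second gadget, variables $c_{e,i},d_{e,i}$ with weights $c=p^e_{crash}$ and $d=\frac{1}{2-p^e_{crash}}$ and constraints $c_{e,i}=x_{e,i}\land x_{e,i-2}$, $d_{e,i}=x_{e,i}\land\neg x_{e,i-2}$, contributing $((1-c)\cdot d)^{-1}$ per pair to $\gamma$. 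Your telescoping idea could be adapted --- e.g., re-encode each edge by independent per-step ``survival trial'' variables of weight $p^e_{crash}$ and leave the trials after the first failure unconstrained, so they too sum to $1$ --- but as written the proposal does not address this case and would need that extension to match the full scope of the paper's proof.
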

\begin{proof}
We start with temporary crashes and omits.%full and for permanent crashes the proof is similar and can be found in the full version. %Appendix~\ref{app:literal WMC}. 
Recall that there are two types of variables in $\psi$; variables of the form $x_{m,v,i}$ that simulate the runs of the underlying automata and variables of the form $x_{e,i}$ that represent the fact that a fault occurs in $e$ (crashes for odd $i$ and omissions for even $i$). Since the automata are deterministic, the values of the first type of variables is determined by the second type of variables. A first attempt to define the weights of the $x_{e,i}$ variables would be to set them to $p^e_{crash}$ and $p^e_{omit}$, respectively. However, this definition fails as there is dependency between crashes and omits; an omit cannot occur on an edge that crashes. In the following, we introduce new variables to correct the dependencies. 

It is convenient to add a variable $fr_{e,i}$ that gets value true when one of the messages is forwarded on $e$ at time $i$, thus an omission can occur only if $fr_{e,i}=\tt$. Note that it is implicit that $fr_{e,i}=\tt$ only when $e$ does not crash. Let $i$ be even, and recall that $x_{e,i}=\tt$ when $e$ exhibits an omission. The behavior we are expecting is $\Pr[x_{e, i} = \tt | fr_{e, i} = \tt] = p^e_{omit}$ and $\Pr[x_{e, i} = \tt | fr_{e, i} = \ff] = 0$. In order to model this behavior, we multiply the score of $\psi'$ by $\gamma$, add two independent variables $a_{e,i}$ and $b_{e,i}$ with respective weights $a$ and $b$, which we calculate below, and constraints $a_{e, i} = x_{e, i} \land fr_{e, i}$ and $b_{e, i} = \neg x_{e, i} \land \neg fr_{e, i}$. Recall that $\Pr[x_{e, i} = \tt | fr_{e, i} = \tt]$ should equal $p^e_{omit}$. In that case, we have $a_{e,i} = \tt$ and $b_{e,i} = \ff$ with probability $a \cdot (1-b)$. Thus, we have $p^e_{omit} = \gamma \cdot a\cdot (1-b)$. We do a similar calculation for the three other cases to obtain two other equations: $1-p^e_{omit} = \gamma \cdot (1-a) \cdot (1-b)$ and $1 = \gamma \cdot (1-a) \cdot b$. Thus, we define $a = p^e_{omit}$, $b=\frac{1}{2-p^e_{omit}}$, and $\gamma^{-1} = (1-a) \cdot b$.

In the permanent-crash model, there are dependencies between crashes; once an edge crashes it cannot recover. We use a similar technique to overcome these dependencies. We introduce two new variables $c_{e, i}$ and $d_{e,i}$ with weights $c=p^e_{crash}$ and $d=\frac{1}{2-p^e_{crash}}$, and constraints $c_{e,i} = x_{e, i} \land x_{e, i-2}$ and $d_{e, i} = x_{e, i} \land \neg x_{e, i-2}$. Additionally, $x_{e, 1}$ is assigned weight $p^e_{crash}$ as it has no dependencies. Note that $\psi'$ is of size polynomial in $\psi$ as we have added at most $4tE$ new variables and constraints, the largest of them ($fr_{e, i} = \bigvee_{m \in \M} x_{m, e, i}$) having size $|M|$. As shown above, each pair $\zug{a_{e,i}, b_{e,i}}$ contributes $\gamma^{-1}$ to the normalization factor. Similarly, each pair $\zug{c_{e,i}, d_{e,i}}$ contributes $((1-c)\cdot d)^{-1}$.\hfill\qed
\end{proof}

Finally, we use the reduction from literal-weight WMC to UMC as described in \cite{CFMV15}, thus we obtain the following. 

\begin{theorem}
\label{thm:sharpSAT}
The problem of scoring a forwarding scheme is polynomial-time reducible to \#SAT.
\end{theorem}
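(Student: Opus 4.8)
The plan is to assemble the chain of reductions established in the preceding theorems and then invoke the known reduction of literal-weighted WMC to UMC from \cite{CFMV15}. Concretely, Theorem~\ref{thm:markov} tells us that $\score(\F)$ equals the probability of reaching $F_\ell$ in $\C$ within $t$ steps, and Section~\ref{sec:WMC} re-expresses this reachability probability as the weighted model count $\score(\psi) = \sum_{f \in SAT(\psi)} \Pr[f]$, where $\psi$ is the formula of Theorem~\ref{thm:psi}. Theorem~\ref{thm:literal WMC} then converts this (non-literal-weighted) WMC instance $\zug{\psi, w}$ into a \emph{literal-weighted} WMC instance $\zug{\psi', w'}$ together with a normalization factor $\gamma$ satisfying $\gamma \cdot \score(\psi') = \score(\psi)$, with $\psi'$ polynomial in the size of $\psi$, which is in turn $poly(|\N|, |\F|, |\M|, t, \log \ell)$.

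The first thing I would do is verify that the composition of these reductions is genuinely polynomial-time computable, not merely polynomial in output size. Each constraint in $\psi$ and each auxiliary variable in $\psi'$ (the $fr_{e,i}$, $a_{e,i}$, $b_{e,i}$, $c_{e,i}$, $d_{e,i}$ variables) is produced by a syntactic transformation that reads off the forwarding rules of $\A$, so the construction time is bounded by the output size; I would state this explicitly. I would also record the literal weights $\Pr[l]$ associated with $\psi'$ (the values $a, b, c, d$ and $p^e_{crash}$ computed in the proof of Theorem~\ref{thm:literal WMC}), since these are precisely the input a literal-weighted WMC solver expects. At this point the scoring problem has been reduced in polynomial time to computing $\score(\psi')$ for a literal-weighted WMC instance.

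The final step is to apply the reduction of \cite{CFMV15} from literal-weighted WMC to UMC (i.e., \#SAT). Their construction takes a formula with independent per-literal probabilities and produces an unweighted formula $\psi''$, polynomial in $\psi'$, such that $\score(\psi')$ is recoverable from $\#SAT(\psi'')$ via a known closed-form rescaling. Chaining this with the factor $\gamma$ (and the per-pair factors $\gamma^{-1}$ and $((1-c)\cdot d)^{-1}$ from Theorem~\ref{thm:literal WMC}), we obtain $\score(\F)$ from the unweighted count $\#SAT(\psi'')$ by a single arithmetic post-processing step. Since every intermediate formula is polynomial in size and every transformation runs in polynomial time, the whole pipeline is a polynomial-time reduction from scoring a forwarding scheme to \#SAT.

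I do not expect any single step to be a serious obstacle, since the heavy lifting is already done in Theorems~\ref{thm:markov}, \ref{thm:psi}, and \ref{thm:literal WMC} and in the cited \cite{CFMV15} reduction. The one point requiring genuine care is bookkeeping of the normalization factors: I must confirm that the factor from \cite{CFMV15} and the factors $\gamma$ from the dependency-correction gadgets compose into a single efficiently computable rescaling constant, and that this constant does not itself blow up (e.g., that its representation stays polynomial-sized). Making sure the arithmetic recovering $\score(\F)$ from the integer count $\#SAT(\psi'')$ is exact and polynomially bounded is the only subtlety worth spelling out.
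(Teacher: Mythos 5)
Your proposal is correct and takes essentially the same route as the paper, which derives Theorem~\ref{thm:sharpSAT} in a single step by chaining Theorems~\ref{thm:markov}, \ref{thm:psi}, and \ref{thm:literal WMC} with the literal-weighted WMC to UMC reduction of \cite{CFMV15}. Your extra bookkeeping (polynomial-time computability of each syntactic transformation and the composition of the normalization factors $\gamma$, $((1-c)\cdot d)^{-1}$ into one efficiently representable rescaling constant) is left implicit in the paper but is a sound and welcome elaboration.
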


\section{Computational Complexity}
We study the computational complexity of finding the score of a forwarding scheme. We show that it is \#P-complete by showing that it is equivalent to the problem of counting the number of satisfying assignments of a Boolean formula (a.k.a the \#SAT problem). 
\begin{theorem}
\label{thm:sharp-P}
The problem of computing the score of a forwarding scheme is \#P-Complete.
\end{theorem}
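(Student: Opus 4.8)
The plan is to prove \#P-completeness by establishing membership in \#P and \#P-hardness separately. For membership, I would appeal to the machinery already built: Theorem~\ref{thm:sharpSAT} shows that scoring a forwarding scheme is polynomial-time reducible to \#SAT, and \#SAT is the canonical \#P-complete problem, so the score can be computed by a \#P oracle. More carefully, a \#P function counts accepting paths of a nondeterministic polynomial-time Turing machine. I would argue that a satisfying assignment to the formula $\psi$ from Theorem~\ref{thm:psi} corresponds to a good outcome of fixed polynomial size (a run of length $t$), and that checking whether an assignment is satisfying is polynomial-time. Since the score is a \emph{weighted} count, I would note that after the reduction of Theorem~\ref{thm:literal WMC} and Theorem~\ref{thm:sharpSAT} to unweighted model counting, the quantity we seek is (a known rational factor times) an exact integer count of satisfying assignments, placing it squarely in \#P modulo the standard identification of weighted counting over dyadic rationals with unweighted counting.

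For \#P-hardness, the plan is to reduce \#SAT to the scoring problem, i.e. to construct, from an arbitrary Boolean formula $\varphi$, a network $\N$, a message set $\M$, a forwarding scheme $\F$, a timeout $t$, a guarantee $\ell$, and failure probabilities, such that the score of $\F$ encodes the number of satisfying assignments of $\varphi$. The natural encoding is to use independent edge faults as the ``free Boolean variables'': with each variable of $\varphi$ I would associate an edge whose crash/no-crash outcome (say with probability $\tfrac{1}{2}$) plays the role of a truth value. The forwarding algorithm, whose rules are propositional and can directly inspect which edges are active, would then be designed so that a message arrives on time exactly when the induced truth assignment satisfies $\varphi$. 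Choosing $\ell$ so that the ``good'' event is precisely ``the clause-checking gadget succeeds'' makes the score equal to (number of satisfying assignments)$/2^{n}$, from which the \#SAT count is recovered by multiplying by $2^{n}$.

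The main obstacle I expect is the gadget engineering: the forwarding rules are \emph{local} (each switch sees only its queue and its active outgoing edges), and the assertion syntax $\varphi ::= m \mid e_i \mid m<m' \mid e_i <_m e_j \mid \varphi\vee\varphi \mid \neg\varphi$ is over atoms about message presence and edge activity, not arbitrary Boolean connectives over ``variable'' edges scattered across the network. So I must route a single message along a path whose switches collectively test all literals of $\varphi$, propagating a ``still-satisfiable-so-far'' signal; since a switch cannot see distant faults, the truth of a variable edge must be made locally visible, e.g. by forcing the tested edge to lie in $out(v)$ at the switch $v$ that evaluates it, or by encoding the partial assignment in the message's position. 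Handling formulas that are not in a convenient normal form, and ensuring the timeout $t$ and the per-edge probabilities are uniform enough to make the arithmetic clean (so that the score is exactly $\#\varphi / 2^{n}$), is where the real care lies; I would therefore first put $\varphi$ into CNF and build a layered network, one layer per clause, with a dedicated variable-edge per occurrence tied together by the forwarding logic, rather than attempting a direct structural encoding of $\varphi$.
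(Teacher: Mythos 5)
Your membership direction matches the paper, which likewise obtains the upper bound directly from Theorem~\ref{thm:sharpSAT}. Your hardness skeleton (probability-$\tfrac12$ faults acting as a uniformly random assignment, recovering the count as $2^n$ times a score; the paper actually uses the complement, setting $\ell=1$ and arranging that \emph{bad} outcomes are exactly the satisfying assignments, so $\#\psi=(1-\score(\F))\cdot 2^n$) is also the paper's. But there is a genuine gap at the crux of the gadget. Your concrete plan --- ``a dedicated variable-edge per occurrence tied together by the forwarding logic'' --- cannot work: fault events on distinct edges are \emph{independent} by the model, so a variable occurring in several clauses would receive independent truth values in each clause gadget, and no forwarding rule can correlate them, since rules only read the local queue and the locally outgoing active edges. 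What you would then count is satisfying assignments of a formula in which each occurrence is a fresh variable, not $\#\varphi$. Your fallback options fail too: re-testing a \emph{single} variable edge at the several times a message passes its source switch gives inconsistent readings in the temporary-crash model (activity is re-sampled each step) and both inconsistency and non-uniform marginals in the permanent model (the edge can flip from active to crashed between tests, and $\Pr[\text{active at time } i]=(1-p)^i$ varies with $i$); encoding the partial assignment in the message's position needs $2^n$ states, i.e., an exponentially large network.

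The missing idea, which is how the paper resolves exactly this locality/consistency obstacle, is to spend \emph{one} coin flip per variable at time $0$ and then let a dedicated \emph{variable message} $m_x$ carry the outcome deterministically through the rest of the run: the first edge of $\pi_x$ crashes with probability $\tfrac12$, sending $m_x$ down $\pi_x$ or $\pi_{\neg x}$, and consistency across all clauses containing $x$ is enforced physically rather than logically. Each clause message $m_C$ travels a unique path $\pi_C$ of length exactly $t$; for each literal $l$ of $C$, $\pi_C$ shares one edge with $\pi_l$, with timing synchronized so that when $m_x$ took $\pi_l$ the two messages meet at that edge's source simultaneously; priority of $m_x$ over $m_C$ then delays $m_C$ by one step, so $m_C$ misses the timeout iff some literal of $C$ is true. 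With all variable-message paths longer than $t$ and $\ell=1$, a bad outcome occurs iff the induced assignment satisfies every clause. So your proposal would need to be repaired by replacing the per-occurrence edges with this one-flip-plus-courier-message mechanism (or an equivalent single-sample broadcast); as written, the reduction it describes does not compute $\#\varphi$.
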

\begin{proof}
The upper bound follows from Theorem~\ref{thm:sharpSAT}. For the lower bound, we reduce \#3SAT, the problem of counting the number of satisfying assignments of a 3CNF formula, to the problem of finding the score of a forwarding scheme. Consider an input 3CNF formula $\psi = C_1 \wedge \ldots \wedge C_k$ over a set $X$ of $n$ variables. We construct a network $\N$ with $n+k$ messages, a forwarding scheme $\F$, and $t, \ell \in \Nat$, such that the number of satisfying assignments to $\psi$ is $(1-\score(\F)) \cdot 2^n$. %full We describe the intuition of the construction and the details can be found in the full version.%Appendix~\ref{app:sharp-P}. 

We have two types of messages; {\em variable messages} of the form $m_x$, for $x \in X$, and {\em clause messages} of the form $m_C$, where $C$ is a clause in $\psi$. A variable message $m_x$ has two possible paths it can traverse $\pi_x$ and $\pi_{\neg x}$, where the probability of traversing each path is $0.5$. We achieve this by using the hot-potato algorithm of Example~\ref{ex:hot-potato}, using $\pi_x$ as the first-choice path for $m_x$ and $\pi_{\neg x}$ as the second-choice path, and having the first edge on $\pi_x$ crash with probability $0.5$ and all other edges  cannot crash.  There is a clear one-to-one correspondence between outcomes and assignments to the variables: an outcome $\tau$ corresponds to an assignment $f: X \rightarrow \set{\tt, \ff}$, where $f(x) = \tt$ if $m_x$ traverses $\pi_x$ in $\tau$ and $f(x) = \ff$ if $m_x$ traverses $\pi_{\neg x}$ in $\tau$. Since crashes in times later than $0$ do not affect the choice of $m_x$, we have $\Pr[\text{outcomes with } \pi_x] = \Pr[\text{outcomes with } \pi_{\neg x}]= 0.5$, thus the probability of every assignment is $1/2^n$. 

%$\Pr[\set{f:f(x) =\tt}] = \Pr[\set{f:f(x) = \ff}]= 0.5$, thus the probability of every assignment is $1/2^n$. 

Finally, we associate satisfying assignments with bad outcomes. A bad outcome is an outcome in which no message arrives on time, thus $\ell = 1$. Both paths for the variable messages are longer than the timeout $t$, so these messages miss the timeout in any case. Each clause message $m_C$ has a unique path $\pi_C$ and its length is $t$. Let $l \in \set{x, \neg x}$ be a literal in $C$. Then, $\pi_C$ intersects the path $\pi_l$ in exactly one edge $e$. The paths are ``synchronized'' such that if $m_x$ chooses $\pi_l$, then both $m_x$ and $m_C$ reach the origin of $e$ at the same time. Since $m_x$ has precedence over $m_C$, it will traverse $e$ first, making $m_C$ wait at $s(e)$ for one time unit and causing it to miss the timeout (recall that $|\pi_C| = t$). Note that $m_C$ misses the timeout iff one of the literals in it gets value $\tt$. Thus, an outcome in which all clause messages miss the timeout, i.e., a bad outcome, corresponds to a satisfying assignment to $\psi$, and we are done.

We describe the formal details of the network. Recall that our goal is to synchronize between the variable messages and the clause messages. The initial vertex for the path of a variable $x_i$ has two outgoing edges $e^i_{pos}$ and $e^i_{neg}$. The preference of $e^i_{pos}$ is higher than $e^i_{neg}$, and the probability that $e^i_{pos}$ crashes is $1/2$. The probability of crashes and omissions for all other edges in the network is $0$. If $e^i_{pos}$ crashes, $m_{x_i}$ travels on the path $\pi_{x_i}$ and otherwise it travels on the path $\pi_{\neg x_i}$, and we describe the two paths below. 

Consider a clause $C_j$. We describe the path $\pi_j$ of length $t = 4k +1$ on which the corresponding clause message travels. The path is partitioned into $k+1$ segments, where each segment has $4$ edges apart from the $0$-th segment that has one edge. We assume some arbitrary order on variables. Assuming the variables that appear in $C_j$ are $x_{i_1}, x_{i_2}$, and $x_{i_3}$, where $x_{i_1} < x_{i_2}<x_{i_3}$, then exactly one of the paths of the corresponding variable messages cross the $j$-th segment in $C_j$'s path.  For $x \in \set{x_{i_1}, x_{i_2},x_{i_3}}$, if $x$ appears in positive form, this is the  path $\pi_x$ and otherwise it is the  path $\pi_{\neg x}$. For $l = 1,2,3$, the message that corresponds to $x_{i_l}$ shares the $(l+1)$-th edge with $m_{C_j}$. The last edge takes care of cases in which $x_{i_3}$ is the first variable in the clause $C_{j+1}$, and allows the message time to ``skip'' to the other path. The paths of the message variables have intermediate vertices and edges so that this synchronization is guaranteed as well as other vertices and edges that guarantee that the length of the paths exceed $t$. 
\hfill\qed
\end{proof}

\section{Estimating the Score of a Forwarding Scheme}
\label{sec:estimate}
In this section we relax the requirement of finding an {\em exact} score and study the problem of estimating the score. We study probabilistic algorithms that with high probability return a score that is close to the exact score.

\subsubsection{Iterative counting approach}
We build on the counting method developed in Section~\ref{sec:exact}. A first attempt to estimate the score would be to feed the Boolean formula $\psi'$ we develop there into a tool that approximately solves \#SAT. However, this attempt fails as the reduction of \cite{CFMV15} from weighted to unweighted counting produces an instance that is particularly hard to solve for such solvers. In order to use the literature on approximate counting, we must develop a different technique. We take advantage of the fact that in practice, the probability of failures is very small. Thus, the executions that include many faults have negligible probability. We find an approximate score of a forwarding scheme in an iterative manner. We start with a score of $0$ and uncertainty gap $1$, and iteratively improve both. We allow only permanent edge crashes in this approach and we require all edges to have the same probability. In each iteration we allow exactly $k$ crashes. Calculating the probability of all outcomes with $k$ crashes is not hard. %full The proof of the following lemma can be found in the full version.%Appendix~\ref{app:all-prob}.
\begin{lemma}
\label{lem:all-prob}
The probability of all outputs with exactly $k$ crashes is ${|E| \choose k} \cdot (1-p_{crash})^{(|E|-k) \cdot t} \cdot (1 - (1-p_{crash})^t)^k$.
\end{lemma}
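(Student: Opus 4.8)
The plan is to exploit the fact that, in the permanent-crash model with a uniform crash probability $p_{crash}$, the crash behaviors of distinct edges are mutually independent; the only dependencies the model introduces are temporal, relating an edge's status at consecutive time steps. Consequently the event ``exactly $k$ crashes'' factors over the $|E|$ edges, and the computation reduces to a product of $|E|$ independent two-outcome experiments together with a binomial count of which edges land in the ``crashed'' outcome.

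First I would analyze a single edge $e$ in isolation. Since crashes are permanent, $e$ crashes at most once, so ``a crash on $e$'' is simply the event that $e$ is down by the final time $t$ (matching the earlier convention $|E \setminus T_t| = k$). An edge survives the entire run iff at each of the $t$ time steps it fails to crash, which by the transition rule of the permanent-crash model happens with probability $(1-p_{crash})$ per step, independently across steps; telescoping these factors gives $\Pr[e \text{ never crashes}] = (1-p_{crash})^t$, and hence $\Pr[e \text{ crashes}] = 1 - (1-p_{crash})^t$.

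Next, because ``exactly $k$ crashes'' means precisely $k$ of the $|E|$ edges are down by time $t$, I would fix a set $S$ of $k$ edges and compute, using independence across edges, the probability that exactly the edges in $S$ crash while the other $|E|-k$ edges survive. This equals $\big(1-(1-p_{crash})^t\big)^k \cdot (1-p_{crash})^{(|E|-k)\cdot t}$. Since all edges share the same $p_{crash}$, this quantity is identical for every choice of $S$, and there are $\binom{|E|}{k}$ such choices, indexing pairwise-disjoint events. Summing over them yields $\binom{|E|}{k} \cdot (1-p_{crash})^{(|E|-k)\cdot t} \cdot \big(1-(1-p_{crash})^t\big)^k$, as claimed.

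The only step needing care is the justification of independence across edges together with the single-edge survival count. I would read it directly off the permanent-crash transition probability $\prod_{e\in T'}(1-p^e_{crash}) \cdot \prod_{e\in T\setminus T'} p^e_{crash}$: its product form over edges is exactly what makes the per-edge crash processes independent, and summing the probabilities of all length-$t$ crash schedules of a single edge collapses to $(1-p_{crash})^t$ for survival. I expect no further obstacle, since the statement is essentially the binomial identity for a collection of independent edges each of which either survives or crashes over the $t$-step horizon.
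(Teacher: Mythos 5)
Your proposal is correct and follows essentially the same route as the paper's proof: fix which $k$ edges crash, use the per-edge survival probability $(1-p_{crash})^t$ and crash probability $1-(1-p_{crash})^t$, and multiply over the $\binom{|E|}{k}$ choices. Your additional justification of cross-edge independence from the product form of the permanent-crash transition probability is a welcome elaboration, but not a different argument.
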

\begin{proof}
We first choose the $k$ edges that crash. The probability that the other edges do not crash is $(1-p)^{(|E|-k) \cdot t}$. The probability that an edge does not crash is $(1-p)^t$. Thus, the probability that it crashes at some time is $(1-(1-p)^t)$, and we take the product for the $k$ edges that do crash.\hfill\qed
\end{proof}

We find the probability of the ``good outcomes'' with $k$ crashes using a counting method, add to the score of the scheme and update the uncertainty gap by deducting the probability of the bad outcomes. 
We use the weighted counting framework of \cite{CFMSV14} (which is not weighted-literal WMC). Restricting to $k$ crashes has two advantages, which significantly speed up the counting. First, the solution space is significantly reduced. More importantly, we use the fact that the probabilities of the outcomes do not vary too much. The running time of the method of \cite{CFMSV14} depends on a given estimation of the ratio between the weight of the maximal weighted satisfying assignment and the minimal weighted one, which the authors refer to as the {\em tilt}.  %full The proof of the following lemma can be found in the full version.%Appendix~\ref{app:tilt}.
\begin{lemma}
\label{lem:tilt}
$tilt \leq (1-p_{crash})^{k \cdot t}$.
\end{lemma}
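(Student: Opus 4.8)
The plan is to read $tilt$ as the ratio $w_{\min}/w_{\max}$ between the lightest and the heaviest weight among the counted satisfying assignments, where, following Theorem~\ref{thm:psi}, the weight of an assignment is the probability of the outcome it encodes; this is the parameter governing the running time of the counter of~\cite{CFMSV14}, and since every weight is a probability the ratio lies in $(0,1]$, so that an upper bound of the form $(1-p_{crash})^{k\cdot t}$ is type-correct. The first step is to make these weights explicit. Because this iterative approach permits only permanent crashes and no omissions, an outcome with exactly $k$ crashes is determined by the choice of the $k$ crashing edges together with, for each such edge $e$, the number $\tau_e\in\set{0,\dots,t}$ of steps it stays up before going down --- the routing of every message is then forced. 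Writing $p=p_{crash}$, its probability therefore factors, in agreement with Lemma~\ref{lem:all-prob}, as
\[
 (1-p)^{(|E|-k)\cdot t}\cdot\prod_{e\,\text{crashes}}(1-p)^{\tau_e}\,p .
\]

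The second step is to extremize this expression over the crash-time vectors. The factor $(1-p)^{(|E|-k)\cdot t}$ contributed by the healthy edges and the single factor $p$ contributed by each of the $k$ crashes are common to \emph{every} outcome with exactly $k$ crashes, so the entire spread of the weights comes from $\prod_e(1-p)^{\tau_e}$. A crashing edge contributes the largest factor when it goes down immediately ($\tau_e=0$) and the smallest when it survives the whole horizon before crashing ($\tau_e=t$); hence $w_{\max}$ is realized when all $k$ edges crash at once and $w_{\min}$ when each survives the full $t$ steps. Forming the ratio cancels every common factor and leaves exactly one factor $(1-p)^{t}$ per crashing edge, so that $w_{\min}/w_{\max}=\prod_{e}(1-p)^{t}=(1-p)^{k\cdot t}$, which yields $tilt\le(1-p_{crash})^{k\cdot t}$ as claimed.

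The step I expect to be the main obstacle is the dependency structure of permanent crashes: the per-step survival events of a single edge are correlated, and a naive literal-by-literal estimate would not respect the fact that an edge can go down only once. I would dissolve this exactly through the observation used above --- fixing the set of crashing edges together with their survival times fixes the outcome and hence its weight --- which reduces the computation of the tilt to extremizing the factorized probability over crash-time vectors, for which the two extremes are immediate. A secondary point is that the counted assignments are the \emph{good} $k$-crash outcomes rather than all of them; here I would argue that the two extremal configurations (all edges crashing at once, and all edges surviving the full horizon) are themselves realizable among the counted outcomes, so that their ratio is precisely the tilt of the counted instance, giving the stated bound.
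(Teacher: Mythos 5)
Your proof takes essentially the same route as the paper's: the paper's argument is exactly the observation that delaying a single crash by one time slot multiplies the outcome's probability by $(1-p_{crash})$, so the heaviest $k$-crash outcome is bounded by the configuration where all $k$ edges crash immediately and the lightest by the one where every crashing edge survives the whole horizon, with all common factors (the healthy edges and the $k$ crash events) cancelling in the ratio. Your explicit factorization of the weight as $(1-p)^{(|E|-k)\cdot t}\cdot\prod_e (1-p)^{\tau_e}p$ is a cleaner way of packaging the same monotonicity argument, and your handling of the permanent-crash dependencies (fix the crashing set and the survival times, which then determine the outcome) is exactly the implicit parametrization the paper uses.

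Two points deserve correction. First, your closing claim that the two extremal crash configurations ``are themselves realizable among the counted outcomes, so that their ratio is precisely the tilt'' is both unjustified and unnecessary: the algorithm counts only the bad (respectively good) $k$-crash outcomes, and there is no reason the all-crash-at-time-$0$ or all-survive-to-time-$t$ patterns yield outcomes in that subset. The paper sidesteps this correctly by using the global extremes only as \emph{bounds} (``an upper bound on the bad outcome with highest probability,'' and dually for the minimum); since the counted weights lie between the global extremes, the subset's tilt is at most the global ratio, which is all the lemma asserts. You should replace ``precisely'' by ``at most'' and drop the realizability claim. Second, your reading of tilt as $w_{\min}/w_{\max}$ inverts the definition used by the paper and by \cite{CFMSV14}, where tilt is the ratio of the maximal to the minimal weight and hence at least $1$; under that definition the exact spread is $(1-p_{crash})^{-k\cdot t}$ and the lemma as literally printed is type-incorrect (the paper's own proof even ends with the exponent $k\cdot(t-1)$, an off-by-one reflecting whether survival times range over $\set{0,\dots,t}$ or $\set{0,\dots,t-1}$). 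Your inversion silently repairs the statement's typing, but you should flag it as a deviation from the cited tool's convention rather than adopt it tacitly; the mathematical content --- a spread of exactly $(1-p_{crash})^{k\cdot t}$ between the extreme weights --- is the same either way.
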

\begin{proof}
Note that the probability of an outcome with a crash at time $i$ is greater than the probability of the same outcome only with the crash occurring at time $i+1$. Indeed, in the second outcome, the edge has to ``survive'' the $i$-th time slot, thus the probability of the outcomes differ by a factor of $(1-p_{crash})$. Thus, having all crashes occur at time $0$ is an upper bound on the bad outcome with highest probability. Similarly, having all crashes occur at time $t$ is a lower bound on the minimal-weighted outcome. Since all other edges do not crash, we have that $tilt \leq (1-p_{crash})^{k \cdot (t-1)}$.\hfill\qed
\end{proof}

We describe the pseudo code of the approach below.

\begin{center}
\begin{algorithmic}
\algnotext{EndIf}
\algnotext{EndWhile}
\Require A network $\N=\zug{V, E}$, a set of messages $\M$, a forwarding scheme $\F$, constants $t, \ell \in \Nat$, the probability of a permanent crash $p_{crash}$, and $\epsilon > 0$.

\Ensure An additive $\epsilon$-approximation of $\score(\F)$.

\State $uncertainty = 1, score=0, k=0$.
\While{$uncertainty > \epsilon$}
\State $all \gets$ Probability of all outcomes with $k$ crashes.
\State $bad \gets \Call{CalcBadProb}{\N, \M, \S, t, \ell, k}$
\State $uncertainty \ -\!= \ all; \ score \ +\!= \  (all-bad);\ k\ +\!+;$
\EndWhile
\State \Return $score$
\end{algorithmic}
\end{center}

\subsection{A Monte-Carlo Approach}
The Monte-Carlo approach is a very simple and well-known approach to reason about reachability in Markov chains. It performs well in practice as we elaborate in Section~\ref{sec:eval}. We perform $n$ probabilistic simulations of the execution of the Markov chain $\C$ for $2t$ iterations, where $t$ is the timeout and $n$ is a large number which we choose later. In each simulation, we start from the initial state of $\C$. At each iteration we probabilistically choose an outgoing edge and follow it. If we reach a state in $F_\ell$, we list the experiment as $1$, and otherwise as $0$. We use $y_1, \ldots, y_n$ to refer to the outcomes of the experiments, thus $y_i \in \set{0,1}$. Let $r$ be the number of successful experiments. We return $r/n$. We use Hoeffding's inequality to bound the error: $\Pr[\frac{1}{n}\sum_{i=1}^n y_i - \score(\F) \geq \epsilon] \leq e^{-2n\epsilon^2}$. Thus, we choose $n$ so that given requirements on the error and confidence are met.

\section{Evaluation}
\label{sec:eval}
In this section we evaluate the techniques to compute the exact and approximate score of a forwarding scheme. We compare the scalability of these approaches. Our counting techniques rely on black-boxes that count the number of satisfying assignments of a SAT formula. We used sharpSAT \cite{Thu06}  to exactly solve \#SAT and WeightMC \cite{CFMSV14} to approximately solve weighted \#SAT.  Our implementations are in Python and we ran our experiments on a personal computer; an Intel Core i3 quad core 3.40 GHz processor.%, with 8 GB memory.

\subsubsection{Generating a setting}
We evaluate the algorithm on networks that were generated randomly using the library Networkx \cite{HSS08}. %The networks are generated similarly to the procedure in \cite{AGR16} as follows. 
We fix the number of vertices, edges, and messages and generate a random directed graph. We consider relatively dense graphs, where the number of edges are approximately $2.5$ times the number of vertices. Once we have a graph, we randomly select a source and a target for each message. Recall that a forwarding scheme has three components: the forwarding algorithm, message priorities, and edge priorities for each message. 
% We note that the problem finding an optimal forwarding scheme (even if one or two of its components are given) is a challenging problem both theoretically and practically, and it is out of the scope of this paper. 

The forwarding algorithm we use is the ``Hot-potato'' algorithm, which is described in Example~\ref{ex:hot-potato} and has some error-handling capabilities. We choose the message priorities arbitrarily, and we choose the edge preference as follows. We follow a common practice in generating TT-schedules in which we restrict messages to be scheduled on few predefined paths from source to target \cite{steiner2010evaluation,pozo2015decomposition}. For each message, we select a ``first-choice'' path $\pi_m$ using some simple heuristic like taking the shortest path between $s(m)$ and $t(m)$, and a ``fall-back'' path from each vertex on $\pi_m$ to $t(m)$. The collection of fall-back paths form a DAG with one sink $t(m)$. 
This restriction significantly shrinks the formula $\psi$ that we construct.
%short
\stam{
This restriction is helpful in our setting as the automaton $\D_m$ for a message $m \in \M$ can be restricted to the positions in which $m$ can visit, and as a result the variables and constraints in the formula $\psi$ can also be reduced. 
For each vertex $v$ on $\pi_m$ let $e$ and $e'$ be the the next edges on the first- and second-paths, respectively. Then, we have $e' \prec_m^v e$, and the other edges in $out(v)$ have no preference. 
}
 We assume permanent crashes, and set the probabilities of a crash and an omission uniformly in the network to be $0.01$.
 This is a very high probability for practical uses, but we use it because it is convenient to evaluate the calculation methods with a high probability, and the actual score of the forwarding scheme is less important to us.
All results have been averaged over 3-5 runs. Each program times out after 1 hour, returning ``timeout'' if it has not terminated by then.

%\vspace{-0.5cm}
\paragraph{Execution time measurements}
We have implemented the exact and estimating approaches that are described in Sections~\ref{sec:exact} and~\ref{sec:estimate}. The running times are depicted in Figure~\ref{fig:runtime}. We note that it is unfair to compare the exact method to the estimation ones, and we do it nonetheless as it gives context to the results. The sharpSAT tool performs well (even better than the approximation tools) for small instances. But, the jump in running time is sudden and occurs for networks with $7$ nodes, where the running time exceeded an hour.
%\vspace{-0.5cm}

\stam{
 \begin{figure}
  \begin{minipage}[b]{0.5\linewidth}
\includegraphics[height=5cm]{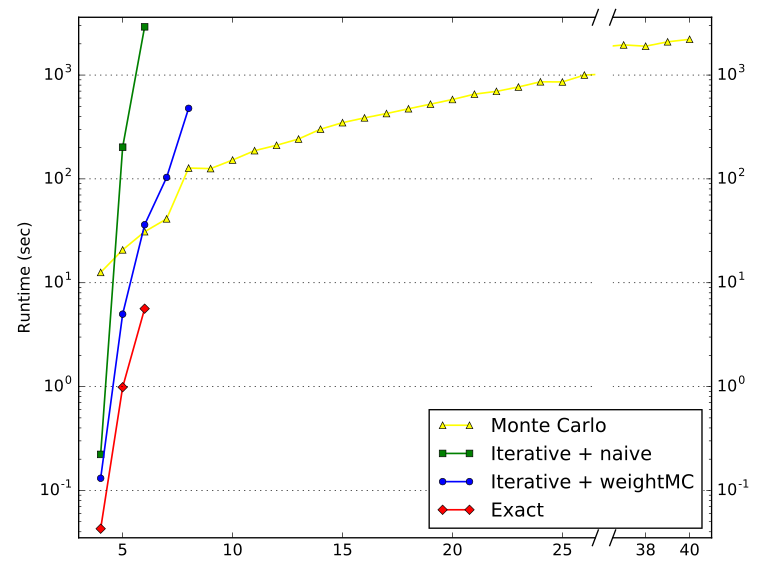}
\caption{\small The running time of the approaches on increasing-sized networks.}
\label{fig:runtime}
\end{minipage}
  \hspace{0.6cm}
  \begin{minipage}[b]{0.4\linewidth}
        \centering
 \begin{tabular}{|c||c|c|c|}
 \hline
 Num. of nodes&  Exact &  Monte-Carlo & Error \\
 \hline\hline
 $4$  & $0.998$ & $0.998$ & $0.0002$\\
 \hline
 $5$  & $0.965$ & $0.963$ & $0.001$\\
 \hline
 $6$  & $0.967$ & $0.968$ & $0.0005$ \\
 \hline
 \end{tabular}
\caption{\small Comparison of the exact score with the one obtained by the Monte-Carlo approach.}
\label{tab:score}
  \end{minipage}%
\end{figure}
}

\begin{figure}
\begin{floatrow}
\ffigbox{%
\includegraphics[height=4cm]{combined.png}
}{%
\caption{\small The running time of the approaches on increasing-sized networks.}
\label{fig:runtime}
}
\capbtabbox{%
\hspace{0.2cm}
 \begin{tabular}{|c||c|c|c|}
 \hline
 Num. of nodes&  Exact &  Monte-Carlo & Error \\
 \hline\hline
 $4$  & $0.998$ & $0.998$ & $0.0002$\\
 \hline
 $5$  & $0.965$ & $0.963$ & $0.001$\\
 \hline
 $6$  & $0.967$ & $0.968$ & $0.0005$ \\
 \hline
 \end{tabular}
}{%
\caption{\small Comparison of the exact score with the one obtained by the Monte-Carlo approach.}
\label{tab:score}
}
\end{floatrow}
\end{figure}

For estimating the score, we have implemented two approaches; an iterative approach and a Monte-Carlo approach. Recall that the crux in the first approach is computing the probability of bad outcomes with exactly $k$ crashes. We use two techniques; the tool weightMC \cite{CFMSV14} as well as a naive counting method: we iteratively run Z3 \cite{MB08} to find an assignment and add its negation to the solver so that it is not found again. We combine the naive approach with an optimization that is similar to the one that was shown to be helpful in \cite{AGR16}, but we find it is not helpful in our setting.

Finally, we implemented a Monte-Carlo approach in Python using randomization functions from the Numpy library. 
%In order to improve the running time, 
We ran the simulations on $4$ threads, which we found was an optimal number for our working environment. We evaluated the Monte Carlo approach using an error $\epsilon=0.01$, and a confidence of $\delta=0.99$. 

The leading estimation method is the Monte-Carlo approach, which scales quite well; in reasonable time, it can calculate the score of moderate sized networks and shows a nice linear escalation with the network growth. It is somewhat frustrating that this simple approach beats the approaches that rely on counting hands down as a significant amount of work, both theoretical and in terms of optimizations, has been devoted in them. As mentioned earlier, the research on SAT counting is still new and we expect improvements in the tools, which will in turn help with our scalability.

\paragraph{Evaluating the approximation}
Apart from the theoretical interest in an exact solution, it can serve as a benchmark to evaluate the score the estimation methods output. In Table~\ref{tab:score}, we compare the scores obtained by the exact solution and by the Monte-Carlo solution and show that the error is well below our required error of $0.01$.

\stam{
The counting approaches still have their benefits; the weighted model counting approach provides an exact calculation of the score, and can be used as a sanity check for small instances as we do in Figure~\ref{??}. Also, having more than one method of counting helps in finding bugs (e.g., we observed numerical instabilities in the WMC approach when comparing with the naive counting approach). Finally, we expect these counting tools to improve over time, and since we use them as a black-box, with their improvement, our approach will also improve.
}

%\vspace{-0.1cm}
\section{Discussion}
\label{sec:disc}
We introduce a class of forwarding schemes that are capable of coping with faults and we reason on the predictability of a forwarding scheme. We study the problem of computing the score of a given a forwarding scheme $\F$ in a network $\N$ subject to probabilistic failures, namely the probability that at least $\ell$ messages arrive on time when using $\F$ to forward messages in $\N$. We reduce the problem of scoring a forwarding scheme to \#SAT, the problem of counting the number of satisfying assignments of a Boolean formula. Our reduction goes through a reachability problem on a succinctly represented Markov chain $\C$. The Boolean formula we construct simulates the executions of $\C$. We considered a class of forwarding schemes that operate in a network with a notion of global time and two types of faults; edge crashes and message omissions. Our solution is general and allows extensions in all three aspects. We can add features to our forwarding scheme such as allowing ``message waits'' (as was mentioned in Example~\ref{ex:TT}) or even probabilistic behavior of the switches as long as the forwarding scheme is represented by propositional rules in the switches, we can support asynchronous executions of the switches (which requires a careful definition of ``timeout''), and we can support other faults like ``clock glitches'' in which a message arrives at a later time than it is expect to arrive. Our work on reasoning about Markov chains with the ``product-like'' structure of $\C$ is relevant for other problems in which such structures arise. For example in reasoning about concurrent probabilistic programs \cite{Var85}, where $\C$ simulates the execution of concurrent programs modeled using automata. 

\subsubsection*{Acknowledgments}
We thank Kuldeep Meel for his assistance with the tools as well as helpful discussions.

\small
\bibliographystyle{abbrv}
\bibliography{../ga}
\end{document}